\documentclass[lettersize,journal]{IEEEtran}
\ifCLASSINFOpdf
\else
\fi
\usepackage{stfloats}
\usepackage{cite,amssymb}
\usepackage{mathrsfs}
\usepackage[pdftex]{graphicx}
\usepackage{algorithm}
\usepackage[cmex10]{amsmath}
\usepackage[T1]{fontenc}
\usepackage{textcomp}
\usepackage{algpseudocode}
\usepackage{array,balance,bm}
\usepackage{nicematrix}
\usepackage{arydshln} 
\usepackage{lipsum}
\usepackage{booktabs}
\usepackage{siunitx} 

\usepackage{amsfonts}
\usepackage{exscale}
\usepackage{relsize}
\usepackage{multicol,color}

\usepackage{amsthm}

\usepackage[caption=false]{subfig}

\newtheorem{theorem}{Theorem}
\newtheorem{lemma}{Lemma}
\newtheorem{corollary}{Corollary}
\newtheorem{proposition}{Proposition}

\theoremstyle{remark}
\newtheorem{remark}{Remark}

\newcommand{\be}{\begin{equation}}
	\newcommand{\ee}{\end{equation}}

\allowdisplaybreaks[4]

\begin{document}

	\title{Coverage Analysis of Rydberg Atom Quantum Receiver Arrays: A Stochastic Geometry Approach}
	\author{Dongnan Xia, Cunhua Pan, Senior Member, IEEE, Hong Ren, Member, IEEE, Dongsheng Sui, Qihao Peng, Member, IEEE, and Jiangzhou Wang, Fellow, IEEE%
	\thanks{Dongnan Xia, Cunhua Pan, Hong Ren, Dongsheng Sui, and Jiangzhou Wang are with the National Mobile Communications Research Laboratory, Southeast University, Nanjing, China (e-mail: \{dnxia, cpan, hren, 230258960, j.z.wang\}@seu.edu.cn).}%
	\thanks{Qihao Peng is with the School of Computer Science and Electronic Engineering, University of Surrey, Guildford, U.K. (e-mail: q.peng@surrey.ac.uk).}%
	\thanks{Corresponding authors: Cunhua Pan and Hong Ren.}}
	\maketitle
	\bstctlcite{setting}
	
	
	\begin{abstract}
		Rydberg atomic quantum receivers (RAQRs) offer quantum-limited sensitivity and broadband tunability. 
		It is not obvious whether this device-level advantage also improves network reliability, since in dense deployments, aggregate interference can push the atomic transducer out of its small-signal regime. This paper addresses the question by embedding the RAQR front end into a stochastic geometry (SG) coverage analysis. Starting with the atomic master equation and balanced coherent optical detection, we derive a third-order complex baseband model that retains both the linear gain and the leading  cubic nonlinearity. A Bussgang decomposition converts the per-element nonlinear response into an equivalent linear gain plus a distance-dependent distortion noise. Using this equivalent model, we derive the post maximal-ratio combining (MRC) SINR and obtain tractable expressions for the conditional and spatially averaged coverage probabilities. 
		The analytical results show that RAQRs outperform conventional receivers in sparse deployments. However, when the base station (BS) density becomes large, nonlinear distortion reduces this advantage and may make RAQRs perform worse.
		Simulation results validate the analytical expressions and confirm that the central design tradeoff is between linear gain and cubic nonlinearity.
	\end{abstract}
	
	\begin{IEEEkeywords}
		Rydberg atomic quantum receiver, coverage probability, stochastic geometry, receiver nonlinearity, Bussgang decomposition.
	\end{IEEEkeywords}
	
	\section{Introduction}\label{sec:introduction}
	The continuing evolution of wireless communication and sensing systems is placing growing pressure on the receiver front-end, particularly in scenarios that demand weak-signal detection, high sensitivity, broad tunability, and compact hardware.
	Conventional electronic receivers, however, are still constrained by the antenna size, the complexity of the front-end circuitry, and thermal noise introduced along the analog chain, which together limit their sensitivity~\cite{rappaport2010wireless}.
	Rydberg atomic quantum receivers (RAQRs) provide a different receiving mechanism: they use the interaction between incident radio-frequency (RF) fields and highly excited atomic states, and read out the atomic response optically rather than through a conventional electronic front-end~\cite{fancher2021rydberg, sedlacek2012microwave}.
	Experimental demonstrations have reported sensitivities on the order of tens of $\mathrm{nV/cm/\sqrt{Hz}}$ and minimum detectable fields at the sub-$\mathrm{nV/cm}$ level~\cite{jing2020atomic}.
	These measurements reflect several practical strengths of RAQRs: self-calibration based on atomic constants, broad spectral coverage, and operation without a metallic antenna front end.
	These features support their use in quantum sensing and wireless reception~\cite{holloway2014broadband,
	anderson2021self,gong2025raqr,cui2024rare}.
	At the same time, practical sensitivity is affected by quantum projection noise, photodetector noise, and laser intensity and phase noise~\cite{wang2023noise, tang2025noise}, so device-level sensitivity alone does not determine network reliability.

	
	Recent studies have expanded the study of Rydberg receivers from initial experiments to wireless receiver architectures and corresponding system models.
	Early studies showed that atom based receivers could recover digitally
	modulated signals~\cite{meyer2018digital,song2019digital} and amplitude
	modulated signals in single and multiband settings
	\cite{li2022am,holloway2020multiband}. These results indicated that
	Rydberg receivers can process information carrying microwave signals
	rather than only detect static fields.
	Subsequent demonstrations extended the supported waveforms to phase
	modulation~\cite{cai2023phase} and quadrature amplitude modulation
	\cite{nowosielski2024qam}.
	This progress was strengthened by continuous frequency atomic heterodyne detection, which removed the frequency discreteness of earlier Rydberg heterodyne schemes and made this technique more suitable for practical receiver implementation~\cite{liu2022continuous}.
	More recently, research has extended from isolated reception scenarios to broader communication system design. Atomic multiple input multiple output (MIMO) receiver architectures have been proposed and analyzed~\cite{cui2025atomicmimo}. Electromagnetic modeling, capacity analysis, and multi-band RAQ-MIMO architectures for Rydberg atomic receiver systems have also been reported~\cite{yuan2025mimo,zhu2025general,zhu2025raq}. In addition, dynamic signal models based on transfer functions have been developed to characterize the response to time varying inputs~\cite{zhu2025dynamic}.
	In parallel, Gong \emph{et al.} developed the first equivalent baseband model for superheterodyne RAQRs in the SISO case, which gave a closed form expression for the linear gain and made the model convenient for communication performance analysis~\cite{gong2025raqr}.
	These studies showed that RAQRs were being studied not only as quantum sensors, but also as candidate receivers for wireless communication systems.
	Recent work has further extended RAQR research toward richer communication scenarios. For example, Rydberg atomic quantum receivers have been investigated for enhanced ground--satellite direct access, simultaneous wireless information and power transfer~(SWIPT)-MIMO networks, and RIS-assisted atomic MIMO reception~\cite{peng2026enhanced,peng2025swipt,peng2025ris}. 
	These studies expanded the range of communication related problems considered in RAQR research, but they still did not address reliability in large scale wireless networks.
	Existing studies are still mainly focused on isolated links, deterministic propagation conditions, or weak signal linear operation. These cannot show whether the advantages of RAQRs remain in random interference environments.


	Most existing studies on Rydberg atomic receivers remain at the device or single-link level and are usually developed under deterministic propagation assumptions.
	A network-level evaluation must instead account for the random serving distance and the interference generated by other active users.
	Stochastic geometry is commonly used for this purpose, and provides tractable signal-to-interference-plus-noise ratio~(SINR) and coverage characterizations for cellular networks~\cite{andrews2011tractable, haenggi2012stochastic}, uplink networks with fractional power control~\cite{novlan2013uplink}, heterogeneous deployments~\cite{dhillon2012hetnet}, and integrated sensing and communication networks~\cite{gan2024isac}.
	These results, however, are built around a linear receiver response.

	
	This linear-receiver assumption is not directly compatible with RAQR hardware.
	In a Poisson network, the input at each RAQR element of a typical base station (BS) is the sum of contributions from the desired user equipment (UE) and from active interfering UEs in the same network tier.
	As the BS density grows, the typical serving distance shrinks and interferers become closer to the receiver, both of which raise the total per-element input power.
	Atomic superheterodyne receivers have a finite linear dynamic range~\cite{wu2023ldr}, so once the incident field is strong enough, the atomic transducer enters gain compression and its response no longer follows the small‑signal slope.
	The first-order baseband model used in~\cite{gong2025raqr} is accurate in the weak-field regime but does not represent this strong-input behavior, which is the regime that random interference naturally produces in dense deployments.
	The first‑order baseband model from [2] is valid only for weak fields and does not account for the strong input that random interference generates in dense deployments.
	A higher-order baseband model that retains the leading nonlinear term is therefore needed before any network-level coverage analysis of RAQRs becomes meaningful.

	
This paper addresses these two requirements jointly.
We first extend the existing first-order RAQR baseband model to third order, which captures the leading nonlinear term and remains tractable.
We then embed this model into a stochastic-geometry coverage analysis to identify when the low-noise benefit of RAQR arrays is preserved and when nonlinear distortion becomes dominant.
Existing RAQR studies have considered system overviews, atomic MIMO reception, and dynamic signal models~\cite{gong2025raqr, cui2025atomicmimo, zhu2025dynamic}, while stochastic-geometry coverage analysis usually assumes a linear receiver response~\cite{andrews2011tractable, gan2024isac}.
However, existing studies have not analyzed coverage performance when random aggregate input power drives the RAQR into the nonlinear regime.
To the best of our knowledge, this nonlinear reception effect has not been incorporated into tractable RAQR coverage analysis.

The main contributions are summarized as follows:
\begin{itemize}
	\item \textbf{Third-order baseband signal model.}
	Starting with the four-level Lindblad master equation and the balanced coherent optical detection (BCOD) readout, we derive a complex baseband model for each RAQR element.
	The model keeps the linear coefficient $c_1$ and the leading cubic coefficient $c_3$, and therefore reduces to the existing first-order model in the weak-field regime while capturing nonlinear distortion under strong aggregate input power.
	
	\item \textbf{Bussgang linearization and SINR characterization.}
	We apply the Bussgang decomposition to the cubic RAQR response and obtain an equivalent gain $\kappa(r)$ and a distortion variance $\sigma_d^2(r)$.
	This converts the nonlinear receiver output into a tractable linear form with an additional distortion term.
	Based on this representation, we derive the post-MRC SINR in Theorem~\ref{thm:SINR}, where the effective noise $\sigma_{\mathrm{eff}}^2(r)$ includes both photodetection noise and Bussgang distortion.
	
	\item \textbf{Coverage probability and benchmark analysis.}
	We derive the conditional and spatially averaged coverage probabilities for the RAQR uplink model.
	A conventional linear receiver is also analyzed under the same network setting, which gives a direct benchmark for quantifying the RAQR gain and the loss caused by nonlinear distortion.
	We further discuss the noise-limited case and the effect of array coupling, which help explain the regimes where RAQR arrays benefit from low receiver noise and weak inter-element coupling.
	
	\item \textbf{Numerical validation and operating regimes.}
	Monte Carlo simulations are used to validate the analytical coverage expressions for different BS densities, SINR thresholds, and array sizes.
	The results compare different atomic species and array configurations, and show when the low-noise benefit of RAQRs is dominant and when the ratio $|c_3/c_1|$ becomes the limiting factor.
\end{itemize}
	
	\textbf{Notation:} 
	Bold lowercase and uppercase letters denote vectors and matrices, 
	respectively. 
	The operators $(\cdot)^T$, $(\cdot)^H$, and $\|\cdot\|$ denote 
	transpose, Hermitian transpose, and Euclidean norm, respectively. 
	The symbol $\mathbb{E}[\cdot]$ denotes expectation, and 
	$\mathcal{CN}(\mu,\sigma^2)$ denotes a circularly symmetric complex 
	Gaussian random variable with mean $\mu$ and variance $\sigma^2$. 
	$\|\mathbf{x}\|$ denotes the Euclidean distance from $\mathbf{x}$ to the origin. 
	Other symbols are defined when they first appear.
	
	The rest of the paper is organized as follows. 
	Section~\ref{sec:signal_model} develops the third-order baseband 
	model of the RAQR element. 
	Section~\ref{sec:SG_framework} presents the stochastic geometry 
	analysis and derives the coverage probability. 
	Section~\ref{sec:special_cases} discusses several special cases 
	and extensions. 
	Section~\ref{sec:simulation} reports the numerical results, 
	followed by the conclusion in Section~\ref{sec:conclusion}.

	\section{Signal Model of RAQR}\label{sec:signal_model}
	
	\subsection{Four-Level Atomic Model}\label{subsec:electron_transition}
	
	\begin{figure}[t]
		\centering
		\includegraphics[width=\columnwidth]{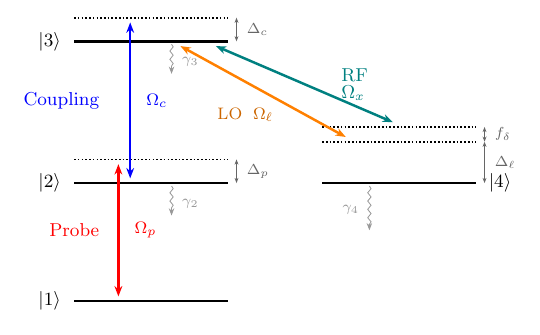}
		\caption{Four-level ladder energy diagram of the RAQR superheterodyne 
			receiver. The probe, coupling, and LO fields drive the 
			successive transitions $|1\rangle{\to}|2\rangle{\to}|3\rangle{\to}|4\rangle$ 
			at detunings $\Delta_p$, $\Delta_c$, and $\Delta_\ell$, 
			respectively, and the weak RF signal $\Omega_x$ sits at offset 
			$f_\delta$ from the LO on the $|3\rangle{\to}|4\rangle$ transition.}
		\label{fig:system}
	\end{figure}
	
	The RAQR is built on a four-level ladder-type atomic system $|1\rangle$--$|2\rangle$--$|3\rangle$--$|4\rangle$, as illustrated in Fig.~\ref{fig:system}. A probe laser with Rabi frequency $\Omega_p$ couples the ground state $|1\rangle$ to the intermediate excited state $|2\rangle$, and a coupling laser with Rabi frequency $\Omega_c$ further drives the atom into the Rydberg state $|3\rangle$. In the superheterodyne configuration, a strong quantum local oscillator~(LO) with Rabi frequency $\Omega_\ell$ drives the adjacent Rydberg transition $|3\rangle \to |4\rangle$, while the desired RF signal sits at a small frequency offset $f_\delta \triangleq f_x - f_\ell$ from the LO. The composite Rabi frequency on the $|3\rangle$--$|4\rangle$ transition is therefore
	\begin{equation}\label{eq:Omega_RF}
		\Omega_{\mathrm{RF}}(t) \approx \Omega_\ell + \Omega_x\cos(2\pi f_\delta t + \varphi_\delta),
	\end{equation}
	where $\Omega_x$ denotes the Rabi-frequency amplitude induced by the incident RF field, and
	$\varphi_\delta$ is the phase offset between the incident RF field and the LO.
	Equation~\eqref{eq:Omega_RF} keeps only the slowly varying envelope at $f_\delta$. The counter-rotating contributions are removed by the rotating-wave approximation that underlies \eqref{eq:Hamiltonian}.
	Since $f_\delta$ is typically much smaller than the dominant atomic decay rate $\gamma_2$ of state $|2\rangle$, the density matrix reaches its steady state well within a single beat period.
	We can therefore treat $\Omega_{\mathrm{RF}}$ as a quasi-static 
	parameter when solving the master equation, and reintroduce 
	its time dependence after extracting the baseband envelope.
	
	Under this approximation, the rotating-frame Hamiltonian of the four-level system is
	\begin{align}\label{eq:Hamiltonian}
		\mathcal{H} =
		\begin{bmatrix}
			0 & \dfrac{\Omega_p}{2} & 0 & 0 \\[4pt]
			\dfrac{\Omega_p}{2} & \Delta_p & \dfrac{\Omega_c}{2} & 0 \\[4pt]
			0 & \dfrac{\Omega_c}{2} & \Delta_p+\Delta_c & \dfrac{\Omega_{\mathrm{RF}}}{2} \\[4pt]
			0 & 0 & \dfrac{\Omega_{\mathrm{RF}}}{2} & \Delta_p+\Delta_c+\Delta_\ell
		\end{bmatrix},
	\end{align}
	where $\Delta_p$, $\Delta_c$, and $\Delta_\ell$ are the detunings of the probe, coupling, and LO fields, respectively. The full atomic dynamics follow the Lindblad master equation $\dot{\bm{\rho}} = -j[\mathcal{H},\bm{\rho}] - \tfrac{1}{2}\{\bm{\Gamma},\bm{\rho}\} + \bm{\Lambda}$, where $\bm{\rho}$ is the atomic density matrix, and the relaxation and repopulation matrices $\bm{\Gamma}$ and $\bm{\Lambda}$ account for spontaneous decay and transit-time broadening.
	In a vapor cell at room temperature, the lifetimes of the 
	Rydberg states $|3\rangle$ and $|4\rangle$ far exceed the 
	baseband period, so $\gamma_2$ is the only non-negligible 
	decoherence rate.
	Imposing $\dot{\bm{\rho}} = 0$ and solving the resulting linear 
	system gives the probe coherence $\rho_{21}$ as a rational 
	function of $\Omega_{\mathrm{RF}}$; the explicit dependence on 
	the detuning parameters is reported in~\cite{gong2025raqr}. 
	The probe absorption is set by $\mathrm{Im}\{\chi\} \propto 
	\mathrm{Im}\{\rho_{21}\}$, so any nonlinearity in $\rho_{21}$ 
	with respect to $\Omega_{\mathrm{RF}}$ is inherited by the 
	optically read-out signal.

\subsection{From RF Field to Detector Voltage}\label{subsec:RF_to_voltage}

Let $u_m$ denote the total RF complex envelope at the $m$-th 
RAQR array element, defined relative to the LO frequency 
$f_\ell$. The superheterodyne down-conversion produces a 
perturbation in the Rabi frequency on the 
$|3\rangle\!\to\!|4\rangle$ transition,
\begin{equation}\label{eq:Delta_Omega}
	\Delta\Omega_m(t) = \frac{\mu_{34}}{\hbar} |u_m| 
	\cos(\omega_\delta t + \angle u_m),
\end{equation}
where $\mu_{34}$ is the $|3\rangle\!\to\!|4\rangle$ 
transition dipole moment and $\omega_\delta = 2\pi f_\delta$. 
The total Rabi frequency on this transition is then 
$\Omega_{\mathrm{RF}}(t) = \Omega_\ell + \Delta\Omega_m(t)$. 
The dependence of $\rho_{21}$ on $\Omega_{\mathrm{RF}}$ is 
carried over to the optical domain via the atomic 
susceptibility
\begin{equation}\label{eq:chi_def}
	\chi(\Omega_{\mathrm{RF}}) = -\frac{2N_0\mu_{12}^2}
	{\varepsilon_0\hbar\Omega_p}\rho_{21}(\Omega_{\mathrm{RF}}),
\end{equation}
where $N_0$ is the atomic number density and $\mu_{12}$ 
the $|1\rangle\!\to\!|2\rangle$ dipole moment. After 
propagating through a vapor cell of length $L_{\mathrm{cell}}$, 
the probe output power and phase become
\begin{align}
	\mathcal{P}_p^{(\mathrm{out})}(\Omega_{\mathrm{RF}}) &= 
	\mathcal{P}_p^{(\mathrm{in})} \exp\!\left(-\frac{2\pi 
		L_{\mathrm{cell}}}{\lambda_p}\Im\{\chi(\Omega_{\mathrm{RF}})\}
	\right), \label{eq:Pp_out}\\
	\phi_p^{(\mathrm{out})}(\Omega_{\mathrm{RF}}) &= 
	\phi_p^{(\mathrm{in})} + \frac{\pi L_{\mathrm{cell}}}
	{\lambda_p}\Re\{\chi(\Omega_{\mathrm{RF}})\}, \label{eq:phi_p_out}
\end{align}
where $\lambda_p$ is the probe wavelength, and 
$\mathcal{P}_p^{(\mathrm{in})}$ and $\phi_p^{(\mathrm{in})}$ 
are the input probe power and phase, respectively.

The modulated probe is then read out by balanced coherent 
optical detection (BCOD): it is interfered with a reference 
beam of power $\mathcal{P}_{\mathrm{IF}}$ and phase 
$\phi_{\mathrm{IF}} = 0$, and the resulting optical signal 
is converted to a voltage by a photodetector followed by 
an LNA~\cite{gong2025raqr}, giving
\begin{equation}\label{eq:VB_def}
	V^{(B)}(\Omega_{\mathrm{RF}}) = \sqrt{G_{\mathrm{LNA}}}\,
	\mathcal{R}_{\mathrm{pd}} \cdot \mathcal{M}(\Omega_{\mathrm{RF}}),
\end{equation}
with photodetector responsivity $\mathcal{R}_{\mathrm{pd}}$, 
LNA gain $G_{\mathrm{LNA}}$, and
\begin{equation}\label{eq:M_def}
	\mathcal{M}(\Omega_{\mathrm{RF}}) \triangleq 2\sqrt{
		\mathcal{P}_{\mathrm{IF}} \cdot \mathcal{P}_p^{(\mathrm{out})}
		(\Omega_{\mathrm{RF}})} \cos\!\big(\phi_p^{(\mathrm{out})}
	(\Omega_{\mathrm{RF}})\big).
\end{equation}
We expand $V^{(B)}$ around the LO operating point 
$\Omega_\ell$ and retain terms up to third order:
\begin{equation}\label{eq:VB_taylor}
	V^{(B)}(\Omega_\ell + \Delta\Omega) \approx a_0 
	+ a_1\Delta\Omega + \tfrac{1}{2}a_2\Delta\Omega^2 
	+ \tfrac{1}{6}a_3\Delta\Omega^3,
\end{equation}
where
\begin{equation}\label{eq:a_coeffs}
	a_n \triangleq \left.\frac{d^n V^{(B)}}{d\Omega^n}
	\right|_{\Omega=\Omega_\ell}, \quad n = 1,2,3,
\end{equation}
and $a_0 \triangleq V^{(B)}(\Omega_\ell)$ is a constant bias 
that is removed by alternating-current~(AC) coupling.

\subsection{Third-Order Baseband Signal Model}\label{subsec:baseband_model}

The existing model in~\cite{gong2025raqr} retains only the 
first-order Taylor coefficient $a_1$, which is accurate 
when the RF field is weak. In a network setting, this is 
no longer sufficient: at high BS density, the aggregate 
interference produces large excursions of 
$\Omega_{\mathrm{RF}}$ around the operating point 
$\Omega_\ell$, and the linear approximation breaks down. 
We therefore carry the Taylor expansion to third order.\footnote{Even-order terms generate direct-current~(DC) and even harmonics, which are rejected by the baseband filter. Higher odd-order terms can also reach the fundamental, but they enter at higher powers of $\Delta\Omega$ around $\Omega_\ell$ and are negligible.}
The cubic term is the leading source of nonlinear 
distortion in this regime, and its strength relative to 
$a_1$ controls how soon the receiver leaves the linear 
regime.

For compactness, define $\tilde{\Omega}_m \triangleq 
(\mu_{34}/\hbar)|u_m|$ and $\varphi_m \triangleq \angle u_m$, 
so that $\Delta\Omega_m(t) = \tilde{\Omega}_m \cos(\omega_\delta 
t + \varphi_m)$. Here $\varphi_m$ absorbs the single-signal 
phase offset $\varphi_\delta$ of \eqref{eq:Omega_RF}. 
Substituting into the Taylor expansion \eqref{eq:VB_taylor} 
yields
\begin{equation}\label{eq:VB_substituted}
	\begin{aligned}
		V^{(B)}(\Omega_\ell + \Delta\Omega_m) = & a_0 + a_1 
		\tilde{\Omega}_m \cos(\omega_\delta t{+}\varphi_m) \\
		&+ \tfrac{1}{2} a_2 \tilde{\Omega}_m^2 \cos^2(\omega_\delta 
		t{+}\varphi_m) \\
		&+ \tfrac{1}{6} a_3 \tilde{\Omega}_m^3 \cos^3(\omega_\delta 
		t{+}\varphi_m).
	\end{aligned}
\end{equation}

After baseband filtering, the zeroth-order term is removed 
by AC coupling, and the second-order term produces only DC 
and $2f_\delta$ components, which lie out of band. The 
remaining contributions at $f_\delta$ come from the first- 
and third-order terms; the detailed derivation is given in 
Appendix~\ref{app:freq_selection}. The third-order term 
contributes to the fundamental frequency with an effective 
coefficient $\tfrac{1}{8}a_3$, which arises from 
$\tfrac{1}{6}a_3 \times \tfrac{3}{4}$. Substituting 
$\tilde{\Omega}_m = (\mu_{34}/\hbar)|u_m|$ and 
$|u_m|e^{j\varphi_m} = u_m$ then gives
\begin{equation}\label{eq:vm_expanded}
	v_m = a_1\frac{\mu_{34}}{\hbar}\,u_m + \frac{1}{8}\,a_3\!
	\left(\frac{\mu_{34}}{\hbar}\right)^{\!3} |u_m|^2\,u_m,
\end{equation}
where $v_m$ is the complex baseband envelope of the 
retained fundamental component.

We then add the photodetection noise 
$w_m \sim \mathcal{CN}(0,\sigma_w^2)$, which lumps together 
shot noise and electronic noise from the BCOD stage and 
enters after the nonlinear transduction. The per-element 
baseband model becomes
\begin{equation}\label{eq:vm_nonlinear}
	v_m = c_1\,u_m + c_3\,|u_m|^2\,u_m + w_m,
\end{equation}
with transduction coefficients
\begin{equation}\label{eq:c1c3}
	c_1 \triangleq a_1 \frac{\mu_{34}}{\hbar}, \qquad
	c_3 \triangleq \frac{1}{8}\,a_3 \!\left(\frac{\mu_{34}}{\hbar}
	\right)^{\!3}.
\end{equation}
Here $c_1$ is the small-signal gain of the Rydberg 
transducer, and recovers the gain coefficient used 
in~\cite{gong2025raqr}. The cubic coefficient $c_3$ 
captures the leading nonlinearity, which is inherited from 
the rational dependence of $\rho_{21}$ on 
$\Omega_{\mathrm{RF}}$. In the weak-field limit 
$|c_3||u_m|^2 \ll |c_1|$, the cubic term is negligible and 
the model reduces to $v_m \approx c_1 u_m + w_m$.

\begin{remark}[Linearity region of the RAQR element]
	\label{rem:linearity}
	In the noiseless case, the relative error of the linear 
	model in \eqref{eq:vm_nonlinear} is
	\begin{equation}\label{eq:lin_error}
		\frac{|v_m - c_1 u_m|}{|c_1 u_m|} = \frac{|c_3|}{|c_1|}\,
		|u_m|^2,
	\end{equation}
	so for a tolerance $\varepsilon$, the linear approximation 
	holds within
	\begin{equation}\label{eq:lin_radius}
		|u_m|^2 \le \varepsilon\,\frac{|c_1|}{|c_3|}.
	\end{equation}
	
	The size of this region is set by $|c_1|/|c_3|$, so the 
	ratio $|c_3/c_1|$ summarizes how soon the per-element 
	response leaves the linear regime. Maximizing $|c_1|$ 
	alone does not enlarge this region, since at typical 
	operating points $|c_3|$ tends to grow together with 
	$|c_1|$. The same ratio also controls the distortion 
	seen at the network level.
\end{remark}
	
\section{Stochastic-Geometry Analysis}\label{sec:SG_framework}

	This section analyzes the uplink coverage probability of a network with RAQR BSs.
	The main difficulty is that the RAQR output is nonlinear, while standard coverage analysis is built around linear signal and interference terms.
	We handle this step by applying the Bussgang decomposition to the per-element RAQR response, which gives an equivalent linear gain and an additive distortion term.

\subsection{Network Model}\label{subsec:network_model}

Consider a single-tier uplink cellular network. 
The BS locations form a homogeneous Poisson point process~(HPPP) $\Phi_b$ with density $\lambda_b$ on $\mathbb{R}^2$, and each BS uses an $M_r$-element RAQR array. 
We consider the typical BS at the origin and denote its serving UE distance by $R_0$. 
To avoid the near-field singularity of the power-law path loss, we condition on $R_0 \ge d_0$ and use $\beta(r)=Cr^{-\alpha}$ for $r\ge d_0$. 
Under nearest-BS association, the conditional density of $R_0$ is
\begin{equation}\label{eq:f_R0}
	f_{R_0}(r) = 2\pi\lambda_b r \exp\!\big(-\pi\lambda_b (r^2 - d_0^2)\big), \quad r \geq d_0,
\end{equation}
which follows by conditioning the standard nearest-neighbor distance distribution on $R_0\ge d_0$.

For tractability, the interfering UEs are approximated by an independent HPPP $\Phi_u^{(\mathrm{int})}$ with density $\lambda_u$ outside the guard zone $B(0,r)$. 
This approximation is commonly used in uplink stochastic geometry analysis with one scheduled UE per cell. 
When one UE is scheduled per active cell, $\lambda_u$ is typically close to $\lambda_b$. 
The desired UE transmits with power $P_0$, and each interfering UE transmits with power $P_u$.

For a UE located at $\mathbf{x}_k \in \mathbb{R}^2$, the narrowband channel to the typical BS is modeled as
\begin{equation}\label{eq:channel_model}
	\mathbf{h}_{0k} = \sqrt{\beta_{0k}}\,\mathbf{g}_{0k},
\end{equation}
where $\beta_{0k}=C\|\mathbf{x}_k\|^{-\alpha}$, $C$ is the path-loss intercept, and $\alpha>2$ is the path-loss exponent. 
The small-scale fading vector satisfies $\mathbf{g}_{0k}\sim\mathcal{CN}(\mathbf{0},\mathbf{I}_{M_r})$. 
The condition $\|\mathbf{x}_i\|\ge r\ge d_0$ keeps all path-loss terms finite.

The RAQR array input is the RF complex envelope vector
\begin{equation}\label{eq:u_vec}
	\mathbf{u} = K\!\left(\sqrt{P_0}\,\mathbf{h}_{00}\,s_0
	+ \sum_{\mathbf{x}_i \in \Phi_u^{(\mathrm{int})}} \sqrt{P_u}\,\mathbf{h}_{0i}\,s_i\right)\!,
\end{equation}
where $s_0$ and $s_i$ are unit-power symbols. 
The coefficient $K \triangleq \sqrt{2/(c\varepsilon_0 A_e)}$ converts the received power envelope to the incident electric-field envelope. Here $c$ is the speed of light, $\varepsilon_0$ is the vacuum permittivity, $f_c$ is the carrier frequency, $\lambda_c = c/f_c$ is the carrier wavelength, and $A_e = \lambda_c^2/(4\pi)$ is the effective aperture. 
The $m$-th entry of $\mathbf{u}$ is the input to the $m$-th RAQR element.

The vector $\mathbf{u}$ contains only the RF signal and interference before optical readout.
The photodetection noise $w_m$ is added after the nonlinear transduction in~\eqref{eq:vm_nonlinear}. 
Thus, the Bussgang decomposition in Section~\ref{subsec:bussgang} is applied to the signal-plus-interference input of the nonlinear block.

	\subsection{Bussgang Linearization and SINR}\label{subsec:bussgang}
		
	From \eqref{eq:vm_nonlinear}, the nonlinear part of each RAQR element can be written as
	$g(u)=c_1u+c_3|u|^2u$, followed by the photodetection noise 
	$w_m\sim\mathcal{CN}(0,\sigma_w^2)$.
	The Bussgang decomposition~\cite{bussgang1952cross,demir2021bussgang} is then applied to $g(u)$.
	This step requires a Gaussian input model for the RAQR element, which is established first.
	For tractability, the Bussgang parameters are computed from the conditional second moment of the RAQR input rather than from each instantaneous PPP realization.
	
	\begin{lemma}[Conditional Input Distribution]\label{lem:input_dist}
		Conditioned on~$R_0 = r$, the per-element input~$u_m$ is well approximated as
		\begin{equation}\label{eq:CSCG_approx}
			u_m \mid r \ \stackrel{\text{approx.}}{\sim}\ \mathcal{CN}\!\left(0,\, \bar{\sigma}_{\mathrm{in}}^2(r)\right),
		\end{equation}
		where the conditional input power is
		\begin{equation}\label{eq:sigma_u_sq}
			\bar{\sigma}_{\mathrm{in}}^2(r) = K^2\!\left(P_0 C\,r^{-\alpha} + \frac{2\pi\lambda_u P_u C}{\alpha - 2}\,r^{2-\alpha}\right).
		\end{equation}
	\end{lemma}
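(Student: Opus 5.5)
Conditioned on $R_0=r$, I would write the $m$-th entry of \eqref{eq:u_vec} as $u_m = K\big(\sqrt{P_0\beta_{00}}\,g_{00,m}s_0 + \sum_{\mathbf{x}_i\in\Phi_u^{(\mathrm{int})}}\sqrt{P_u\beta_{0i}}\,g_{0i,m}s_i\big)$, with $\beta_{00}=Cr^{-\alpha}$. The plan has three parts: zero mean, second moment, and the Gaussian shape.

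\emph{Zero mean and circular symmetry.} Each summand contains an independent fading coefficient $g_{0k,m}\sim\mathcal{CN}(0,1)$, which is independent of the symbols and of $\Phi_u^{(\mathrm{int})}$. Conditioned on the point process and on the symbols, $u_m$ is therefore a sum of independent circularly symmetric terms. It is thus exactly $\mathcal{CN}\big(0, K^2(P_0Cr^{-\alpha}+P_u I)\big)$, where $I\triangleq\sum_i C\|\mathbf{x}_i\|^{-\alpha}|s_i|^2$ is the aggregate interference path gain. So $u_m$ is a Gaussian scale mixture, and the only departure from Gaussianity comes from the randomness of $I$.

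\emph{Second moment.} Taking the expectation over $\Phi_u^{(\mathrm{int})}$ and using $\mathbb{E}|s_i|^2=1$, Campbell's theorem on $\mathbb{R}^2\setminus B(0,r)$ gives
\[
\mathbb{E}[I]=\lambda_u\int_0^{2\pi}\!\!\int_r^\infty C\rho^{-\alpha}\rho\,d\rho\,d\theta=\frac{2\pi\lambda_u C}{\alpha-2}r^{2-\alpha},
\]
which is finite because $\alpha>2$. This yields \eqref{eq:sigma_u_sq}.

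\emph{Approximation step.} I expect the Gaussian shape to be the main obstacle, since it is an approximation rather than an exact result. The justification I would give is moment matching: replace the mixing variable $I$ by its conditional mean, which is the approach the paper already announced for computing the Bussgang parameters. To show this is reasonable, I would compute $\mathrm{Var}[I]=\frac{2\pi\lambda_u C^2\mathbb{E}|s|^4}{2\alpha-2}r^{2-2\alpha}$ by the second-order Campbell formula. The relative spread $\sqrt{\mathrm{Var}[I]}/\mathbb{E}[I]$ then scales like $(\lambda_u r^2)^{-1/2}$ times a constant. When many interferers lie at comparable distances beyond the guard zone, this spread is small, and the Lindeberg CLT applied to the sum of independent complex terms (given the number of points in a large annulus) supports the Gaussian law.

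Finally, I would note that the kurtosis excess of the scale mixture is $\mathrm{Var}[P_uI]/(\mathbb{E}[\cdot])^2$. This gives a quantitative measure of the approximation error, which is what the later Monte Carlo comparison will validate.
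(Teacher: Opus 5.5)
Your proposal is correct. The Campbell computation of $\bar{\sigma}_{\mathrm{in}}^2(r)$ is the same as in the paper, but you justify the Gaussian shape by a different route. The paper verifies zero mean, vanishing pseudo-variance $\mathbb{E}[u_m^2\mid r]=0$, and the second moment one at a time. It then invokes a triangular-array central limit theorem as the mean number of dominant interferers grows, with those moments fixing the limit. You instead condition additionally on $\Phi_u^{(\mathrm{int})}$ and the symbols, so that $u_m$ is exactly $\mathcal{CN}(0,V)$, and hence, given $r$, a Gaussian scale mixture. Circular symmetry, and with it zero mean and properness, is then automatic. The approximation reduces to concentration of the single scalar $V$ around $\mathbb{E}[V]=\bar{\sigma}_{\mathrm{in}}^2(r)$. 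This buys three things.
\begin{itemize}
\item Your Lindeberg remark becomes unnecessary. Since $u_m=\sqrt{V}z$ with $z\sim\mathcal{CN}(0,1)$ independent of $V$, the condition $V/\mathbb{E}[V]\to 1$ in probability already gives the Gaussian limit by Slutsky.
\item Your error measure feeds directly into Lemma~\ref{lem:bussgang}. Because $\mathbb{E}|u|^4=2\mathbb{E}[V^2]$, the exact Bussgang gain is $c_1+2c_3\mathbb{E}[V^2]/\mathbb{E}[V]$. Your excess $\mathrm{Var}[V]/(\mathbb{E}[V])^2$ is therefore precisely the relative error of the $c_3$ term in $\kappa(r)$.
\item It handles the typical regime, which the paper's CLT argument (and your $(\lambda_u r^2)^{-1/2}$ spread bound) does not address. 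Under nearest-BS association with $\lambda_u\approx\lambda_b$, $\pi\lambda_u r^2$ is usually of order one. However, the desired term is exactly Gaussian given $r$ and dilutes the interference fluctuation. For $\alpha=4$, $P_0=P_u$, and constant-modulus symbols, the excess equals $\tfrac{x/3}{(1+x)^2}\le\tfrac{1}{12}$ with $x=\pi\lambda_u r^2$, for every $r$ and $\lambda_u$.
\end{itemize}
Conversely, the paper's argument uses only circularity and finite moments of the fading, so it extends beyond Rayleigh fading. Your exact conditional Gaussianity needs $\mathbf{g}_{0k}\sim\mathcal{CN}(\mathbf{0},\mathbf{I}_{M_r})$.

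There are two minor slips. First, conditioned on the symbols, the desired term contributes $K^2P_0Cr^{-\alpha}|s_0|^2$. Your conditional variance is therefore exact only for constant-modulus $s_0$. Otherwise $|s_0|^2$ enters the mixing variable, and its fluctuation does not vanish when the desired signal dominates. Second, your $I$ clashes with the post-MRC interference $I(r)$ of \eqref{eq:I_def}.
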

	\begin{proof}
		See Appendix~\ref{app:input_dist_proof}.
	\end{proof}
	
	Note that $\bar{\sigma}_{\mathrm{in}}^2(r)$ does \emph{not} include $\sigma_w^2$, since the noise $w_m$ enters after the nonlinear block.
	
	\begin{lemma}[Bussgang Decomposition of the RAQR]\label{lem:bussgang}
		For $g(u) = c_1 u + c_3|u|^2 u$ with $u \sim \mathcal{CN}(0, \bar{\sigma}_{\mathrm{in}}^2)$, the Bussgang linear gain coefficient and the distortion variance are, respectively,
		\begin{align}
			\kappa(r) &= c_1 + 2c_3\,\bar{\sigma}_{\mathrm{in}}^2(r), \label{eq:kappa}\\
			\sigma_d^2(r) &= 2|c_3|^2\!\left(\bar{\sigma}_{\mathrm{in}}^2(r)\right)^3. \label{eq:sigma_d}
		\end{align}
	\end{lemma}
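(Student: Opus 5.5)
We follow the standard Bussgang recipe for a circularly symmetric complex Gaussian input. Write $g(u)=\kappa u + d$, where $\kappa$ is chosen so that the distortion $d$ is uncorrelated with $u$, i.e. $\mathbb{E}[d\,u^*]=0$. This gives
\begin{equation*}
\kappa = \frac{\mathbb{E}[g(u)\,u^*]}{\mathbb{E}[|u|^2]},\qquad
\sigma_d^2=\mathbb{E}[|g(u)|^2]-|\kappa|^2\,\bar{\sigma}_{\mathrm{in}}^2 .
\end{equation*}
Everything then reduces to even moments of $|u|$. For $u\sim\mathcal{CN}(0,\bar{\sigma}_{\mathrm{in}}^2)$ (Lemma~\ref{lem:input_dist}), $|u|^2$ is exponential with mean $\bar{\sigma}_{\mathrm{in}}^2$, so
\begin{equation*}
\mathbb{E}[|u|^{2k}]=k!\,(\bar{\sigma}_{\mathrm{in}}^2)^k .
\end{equation*}

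\textbf{Step 1 (the gain).} Compute
\begin{equation*}
\mathbb{E}[g(u)u^*]=c_1\,\mathbb{E}|u|^2+c_3\,\mathbb{E}|u|^4=c_1\bar{\sigma}_{\mathrm{in}}^2+2c_3(\bar{\sigma}_{\mathrm{in}}^2)^2 .
\end{equation*}
Dividing by $\bar{\sigma}_{\mathrm{in}}^2$ yields \eqref{eq:kappa}.

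\textbf{Step 2 (output power).} Expand the squared modulus:
\begin{equation*}
\mathbb{E}|g(u)|^2=|c_1|^2\mathbb{E}|u|^2+2\Re\{c_1^*c_3\}\,\mathbb{E}|u|^4+|c_3|^2\,\mathbb{E}|u|^6 .
\end{equation*}
Since $c_1,c_3$ may be complex (they come from complex derivatives $a_n$), the cross term is kept as $2\Re\{c_1^*c_3\}$. Substituting the moments gives
\begin{equation*}
\mathbb{E}|g(u)|^2=|c_1|^2\bar{\sigma}_{\mathrm{in}}^2+4\Re\{c_1^*c_3\}(\bar{\sigma}_{\mathrm{in}}^2)^2+6|c_3|^2(\bar{\sigma}_{\mathrm{in}}^2)^3 .
\end{equation*}

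\textbf{Step 3 (distortion variance).} Expand
\begin{equation*}
|\kappa|^2\bar{\sigma}_{\mathrm{in}}^2=|c_1|^2\bar{\sigma}_{\mathrm{in}}^2+4\Re\{c_1^*c_3\}(\bar{\sigma}_{\mathrm{in}}^2)^2+4|c_3|^2(\bar{\sigma}_{\mathrm{in}}^2)^3
\end{equation*}
and subtract it from the result of Step 2. The linear and cross terms cancel exactly, leaving $(6-4)|c_3|^2(\bar{\sigma}_{\mathrm{in}}^2)^3$. This is \eqref{eq:sigma_d}.

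\textbf{Expected obstacle.} The algebra is routine; the main point needing care is justification. Circular symmetry ensures the pseudo-covariance terms such as $\mathbb{E}[u^2]$ vanish, so no widely-linear component $\propto u^*$ appears, and the linear model $\kappa u+d$ is the complete LMMSE representation. We also need to confirm that $d$ is uncorrelated with $u$ by construction. As in the paper, the conditioning on $r$ enters only through $\bar{\sigma}_{\mathrm{in}}^2(r)$, with the Gaussian approximation taken from Lemma~\ref{lem:input_dist}.
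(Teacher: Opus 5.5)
Your proof is correct and follows essentially the same route as the paper: you obtain $\kappa$ from the orthogonality condition $\kappa=\mathbb{E}[g(u)u^*]/\mathbb{E}[|u|^2]$, and $\sigma_d^2$ from the Gaussian moments $\mathbb{E}[|u|^{2k}]=k!\,(\bar{\sigma}_{\mathrm{in}}^2)^k$. The only difference is in how you get $\sigma_d^2$: you compute $\mathbb{E}|g(u)|^2-|\kappa|^2\bar{\sigma}_{\mathrm{in}}^2$, correctly keeping the cross term $2\Re\{c_1^*c_3\}$ until it cancels, whereas the paper evaluates $\mathbb{E}|d|^2$ directly from $d=c_3(|u|^2-2\bar{\sigma}_{\mathrm{in}}^2)u$, where $c_1$ drops out from the start.
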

	\begin{proof}
		See Appendix~\ref{app:bussgang_proof}.
	\end{proof}
	
The coefficient $\kappa(r)$ is the effective linear gain after Bussgang decomposition. 
Compared with $c_1$, it includes an additional term proportional to the aggregate input power $\bar{\sigma}_{\mathrm{in}}^2(r)$. 
Depending on the operating point, this term may reduce the effective gain and produce gain compression. 
The distortion variance $\sigma_d^2(r)$ scales cubically with $\bar{\sigma}_{\mathrm{in}}^2(r)$. 
Thus, the nonlinear distortion is weak at low input power but can become important in dense networks.

Applying Lemma~\ref{lem:bussgang} element-wise, we obtain the equivalent linear model
\begin{equation}\label{eq:v_linear}
	\mathbf{v} = \kappa(r)\mathbf{u}+\mathbf{d}+\mathbf{w},
\end{equation}
where $\mathbf{d}$ is the Bussgang distortion term satisfying
$\mathbb{E}[\mathbf{d}\mathbf{u}^H]=\mathbf{0}$ and
$\mathbb{E}[\mathbf{d}\mathbf{d}^H]=\sigma_d^2(r)\mathbf{I}_{M_r}$.
The photodetection noise is
$\mathbf{w}\sim\mathcal{CN}(\mathbf{0},\sigma_w^2\mathbf{I}_{M_r})$.
This form separates the linear signal and interference terms from the distortion and photodetection noise.

With perfect channel state information~(CSI), the BS applies maximum ratio combining~(MRC), and the decision variable is $\hat{s}_0=\mathbf{h}_{00}^H\mathbf{v}$. 
Substituting \eqref{eq:v_linear} gives
\begin{align}
	\hat{s}_0 &= \underbrace{\kappa K \sqrt{P_0}\,\|\mathbf{h}_{00}\|^2\,s_0}_{\text{desired signal}} 
	+ \underbrace{\kappa K \sum_{\mathbf{x}_i \in \Phi_u^{(\mathrm{int})}} \sqrt{P_u}\,\mathbf{h}_{00}^H \mathbf{h}_{0i}\,s_i}_{\text{interference}} \nonumber\\
	&\quad + \underbrace{\mathbf{h}_{00}^H(\mathbf{d}+\mathbf{w})}_{\text{distortion + noise}}. 
	\label{eq:x_hat_decompose}
\end{align}

We next rewrite the desired and interference powers using scalar fading variables. 
Let
\begin{equation}\label{eq:H0_def}
	H_0 \triangleq \|\mathbf{g}_{00}\|^2 \sim \Gamma(M_r,1),
\end{equation}
so that $\|\mathbf{h}_{00}\|^2=\beta_{00}H_0$ with $\beta_{00}=Cr^{-\alpha}$. 
Define the unit-norm direction $\mathbf{t}_0=\mathbf{h}_{00}/\|\mathbf{h}_{00}\|$. 
For the cross-channel term,
\begin{equation}\label{eq:cross_channel}
	|\mathbf{h}_{00}^H\mathbf{h}_{0i}|^2
	= \|\mathbf{h}_{00}\|^2 |\mathbf{t}_0^H\mathbf{h}_{0i}|^2
	= \|\mathbf{h}_{00}\|^2\beta_{0i}G_i,
\end{equation}
where $G_i\triangleq|\mathbf{t}_0^H\mathbf{g}_{0i}|^2\sim\exp(1)$ due to the isotropy of $\mathbf{g}_{0i}$ and its independence from $\mathbf{t}_0$.

Define the aggregate interference power as
\begin{equation}\label{eq:I_def}
	I(r) \triangleq \sum_{\mathbf{x}_i \in \Phi_u^{(\mathrm{int})}} P_u\beta_{0i}G_i,
\end{equation}
where the dependence on $r$ comes from the exclusion region $\|\mathbf{x}_i\|\ge r$.

Dividing the signal, interference, and distortion-plus-noise powers by $|\kappa(r)|^2K^2\|\mathbf{h}_{00}\|^2$ gives the following result.

\begin{theorem}[SINR under RAQR with Third-Order Nonlinearity]\label{thm:SINR}
	The post-MRC SINR at the typical RAQR BS, conditioned on the serving distance $R_0=r$, is
	\begin{equation}\label{eq:SINR_final}
		\mathrm{SINR}_0
		= \frac{P_0\beta_{00}H_0}{I(r)+\sigma_{\mathrm{eff}}^2(r)},
	\end{equation}
	where $H_0\sim\Gamma(M_r,1)$ and the effective input-referred noise power is
	\begin{equation}\label{eq:sigma_eff}
		\sigma_{\mathrm{eff}}^2(r)
		= \frac{\sigma_w^2+\sigma_d^2(r)}{|\kappa(r)|^2K^2}.
	\end{equation}
\end{theorem}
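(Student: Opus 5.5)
The plan is to start from the MRC decision variable in \eqref{eq:x_hat_decompose} and compute three conditional powers given $R_0=r$, the channel $\mathbf{h}_{00}$, and the interferer realization. These are the desired-signal power, the interference power, and the distortion-plus-noise power. I would treat the distortion $\mathbf{d}$ as an additional noise term that is uncorrelated with the signal and interference. This uncorrelatedness is exactly what Lemma~\ref{lem:bussgang} guarantees through $\mathbb{E}[\mathbf{d}\mathbf{u}^H]=\mathbf{0}$. With that, the SINR is the ratio of the desired power to the sum of the other two (worst-case uncorrelated-noise treatment).

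\textbf{Desired signal.} Since $s_0$ has unit power, the desired power is $|\kappa(r)|^2K^2P_0\|\mathbf{h}_{00}\|^4$.

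\textbf{Interference.} The symbols $s_i$ are independent, zero-mean, unit-power, and independent of $s_0$, so the cross terms vanish and the power is $|\kappa(r)|^2K^2\sum_i P_u|\mathbf{h}_{00}^H\mathbf{h}_{0i}|^2$. Using \eqref{eq:cross_channel}, each summand equals $\|\mathbf{h}_{00}\|^2\beta_{0i}G_i$. The interference power is therefore $|\kappa(r)|^2K^2\|\mathbf{h}_{00}\|^2 I(r)$, with $I(r)$ as defined in \eqref{eq:I_def}.

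\textbf{Distortion plus noise.} I would use $\mathbb{E}[\mathbf{d}\mathbf{d}^H]=\sigma_d^2(r)\mathbf{I}_{M_r}$ and $\mathbf{w}\sim\mathcal{CN}(\mathbf{0},\sigma_w^2\mathbf{I}_{M_r})$, with $\mathbf{w}$ independent of $\mathbf{d}$ because it enters after the nonlinear block. This gives $\mathbb{E}\,|\mathbf{h}_{00}^H(\mathbf{d}+\mathbf{w})|^2=\|\mathbf{h}_{00}\|^2\big(\sigma_d^2(r)+\sigma_w^2\big)$.

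\textbf{Normalization.} I would divide all three powers by $|\kappa(r)|^2K^2\|\mathbf{h}_{00}\|^2$. The numerator becomes $P_0\|\mathbf{h}_{00}\|^2=P_0\beta_{00}H_0$, where $H_0\sim\Gamma(M_r,1)$ by \eqref{eq:H0_def} as a sum of $M_r$ unit exponentials. The interference becomes $I(r)$, and the noise term becomes $\sigma_{\mathrm{eff}}^2(r)$ exactly as in \eqref{eq:sigma_eff}. This yields \eqref{eq:SINR_final}.

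\textbf{Main obstacle.} The delicate step is justifying the distortion covariance seen through the combiner. The Bussgang moments are defined with respect to the conditional Gaussian input of Lemma~\ref{lem:input_dist}, averaged over fading and the PPP. They are not defined conditionally on $\mathbf{h}_{00}$. Moreover, $\mathbf{d}$ is a deterministic function of $\mathbf{u}$, which contains $\mathbf{h}_{00}$. So the claim $\mathbb{E}[|\mathbf{h}_{00}^H\mathbf{d}|^2\mid\mathbf{h}_{00}]=\sigma_d^2\|\mathbf{h}_{00}\|^2$ is not exact. The elements' distortions are also only approximately uncorrelated, since the inputs are correlated through common symbols.

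I would handle this the way the paper frames Lemma~\ref{lem:input_dist}, namely as an approximation. I would invoke the per-element Bussgang statistics with a diagonal distortion covariance, take the combiner direction $\mathbf{t}_0$ as approximately independent of $\mathbf{d}$, and note that only the projected variance matters. The normalization $|\kappa(r)|^2\neq0$ (i.e., operating before complete compression) must also be assumed so the division is valid.
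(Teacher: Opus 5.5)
Your proposal is correct and follows the paper's proof essentially step for step. You compute the same three conditional powers, $S=|\kappa|^2K^2P_0\|\mathbf{h}_{00}\|^4$, $I_{\mathrm{tot}}=|\kappa|^2K^2\|\mathbf{h}_{00}\|^2 I(r)$ and $N_{\mathrm{tot}}=(\sigma_d^2+\sigma_w^2)\|\mathbf{h}_{00}\|^2$, and then divide through by $|\kappa|^2K^2\beta_{00}H_0$. Your two caveats are fair: $\mathbb{E}[|\mathbf{h}_{00}^H\mathbf{d}|^2\mid\mathbf{h}_{00}]=\sigma_d^2\|\mathbf{h}_{00}\|^2$ holds only approximately, because $\mathbf{d}$ depends on $\mathbf{h}_{00}$ through $\mathbf{u}$, and the division needs $\kappa(r)\neq 0$; the paper relies on both without comment.
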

	
	\begin{proof}
		For brevity, let us denote $\kappa=\kappa(r)$ and $\sigma_d^2=\sigma_d^2(r)$. Starting from \eqref{eq:x_hat_decompose}, the desired signal power is
		\begin{equation}
			S = |\kappa|^2 K^2 P_0\,\|\mathbf{h}_{00}\|^4 = |\kappa|^2 K^2 P_0\,\beta_{00}^2 H_0^2.
		\end{equation}
		The interference power, conditioned on the PPP realization, is
		\begin{align}
			\begin{aligned}
				I_{\mathrm{tot}} & = |\kappa|^2 K^2 \sum_{\mathbf{x}_i \in \Phi_u^{(\mathrm{int})}} P_u\,|\mathbf{h}_{00}^H \mathbf{h}_{0i}|^2 \\
				&= |\kappa|^2 K^2\,\|\mathbf{h}_{00}\|^2\,I(r),
			\end{aligned}
		\end{align}
		where the final step uses the definition in \eqref{eq:I_def}.
		The noise-plus-distortion power is
		\begin{equation}
			N_{\mathrm{tot}} = \mathbb{E}\!\left[|\mathbf{h}_{00}^H(\mathbf{d}+\mathbf{w})|^2 \,\big|\, \mathbf{h}_{00}\right] = (\sigma_d^2 + \sigma_w^2)\,\|\mathbf{h}_{00}\|^2.
		\end{equation}
		Therefore,
		\begin{align}
			\mathrm{SINR}_0 &= \frac{S}{I_{\mathrm{tot}} + N_{\mathrm{tot}}} \nonumber \\
			&= \frac{|\kappa|^2 K^2 P_0\,\beta_{00}^2 H_0^2}{|\kappa|^2 K^2\,\beta_{00} H_0\,I(r) + (\sigma_d^2+\sigma_w^2)\beta_{00} H_0}.
		\end{align}
		Dividing numerator and denominator by $|\kappa|^2K^2\beta_{00}H_0$ yields \eqref{eq:SINR_final}.
	\end{proof}

	\subsection{Coverage Probability}\label{subsec:coverage}

	We first characterize the Laplace transform of the aggregate interference, then derive the conditional and unconditional coverage probabilities.
	
	\begin{lemma}[Laplace Transform of the Aggregate Interference]\label{lem:laplace}
		The conditional Laplace transform of $I(r)$, averaging over fading and PPP locations, is
		\begin{equation}\label{eq:laplace_general}
			\mathcal{L}_{I|R_0}(s|r) = \exp\!\left(-2\pi\lambda_u \int_{r}^{\infty} \frac{sP_uC\,t^{-\alpha}}{1 + sP_uC\,t^{-\alpha}}\,t\,dt\right).
		\end{equation}
		For $\alpha = 4$, this evaluates in closed form to
		\begin{equation}\label{eq:laplace_closed}
			\mathcal{L}_{I|R_0}(s|r) = \exp\!\left(-\pi\lambda_u\sqrt{\zeta}\left(\frac{\pi}{2} - \arctan\!\left(\frac{r^2}{\sqrt{\zeta}}\right)\right)\right),
		\end{equation}
		where $\zeta \triangleq sP_uC$.
	\end{lemma}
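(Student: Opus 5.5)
The plan is to compute the Laplace transform directly from its definition and then evaluate the resulting integral. Start from \eqref{eq:I_def}, condition on $R_0=r$, and write
\begin{equation*}
\mathcal{L}_{I|R_0}(s|r)=\mathbb{E}\Big[\prod_{\mathbf{x}_i\in\Phi_u^{(\mathrm{int})}}\exp\!\big(-sP_uC\|\mathbf{x}_i\|^{-\alpha}G_i\big)\Big].
\end{equation*}
The $G_i$ are i.i.d.\ $\exp(1)$. This holds because each $\mathbf{g}_{0i}$ is isotropic and independent of $\mathbf{t}_0$ and of the other interferers, as argued before \eqref{eq:I_def}. They are also independent of the point process. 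We therefore first average over fading, using $\mathbb{E}[e^{-aG}]=1/(1+a)$ for $G\sim\exp(1)$. This turns each factor into $1/(1+sP_uC\|\mathbf{x}_i\|^{-\alpha})$.

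Next, $\Phi_u^{(\mathrm{int})}$ is an HPPP of density $\lambda_u$ on $\mathbb{R}^2\setminus B(0,r)$. We apply the probability generating functional,
\begin{equation*}
\mathbb{E}\Big[\prod f(\mathbf{x})\Big]=\exp\!\Big(-\lambda_u\int_{\|\mathbf{x}\|\ge r}(1-f(\mathbf{x}))\,d\mathbf{x}\Big).
\end{equation*}
Substituting $f(\mathbf{x})=1/(1+\zeta\|\mathbf{x}\|^{-\alpha})$ gives $1-f=\zeta t^{-\alpha}/(1+\zeta t^{-\alpha})$ with $t=\|\mathbf{x}\|$. Converting to polar coordinates produces the factor $2\pi t\,dt$, which yields \eqref{eq:laplace_general}. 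The integral is finite because the integrand is $O(t^{1-\alpha})$ at infinity, $\alpha>2$, and $r\ge d_0>0$.

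For $\alpha=4$, we rewrite the integrand as $\zeta t/(t^4+\zeta)$ and substitute $v=t^2$, so that $dv=2t\,dt$. The exponent becomes
\begin{equation*}
-\pi\lambda_u\int_{r^2}^{\infty}\frac{\zeta}{v^2+\zeta}\,dv.
\end{equation*}
Since $\int \zeta/(v^2+\zeta)\,dv=\sqrt{\zeta}\arctan(v/\sqrt{\zeta})$, evaluating between $r^2$ and $\infty$ gives $\sqrt{\zeta}\big(\tfrac{\pi}{2}-\arctan(r^2/\sqrt{\zeta})\big)$, which is \eqref{eq:laplace_closed}.

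The only step that needs care is the justification of the PGFL step. The interferers are modeled, as stated in the network model, as an independent PPP outside the guard zone, independent of $R_0$ apart from the exclusion radius $r$. The marks $G_i$ must also be independent of the locations. Once these modeling approximations are accepted, the rest is routine calculus. The $\alpha=4$ substitution is the only place where a specific antiderivative is needed.
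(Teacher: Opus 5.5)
Your proposal is correct and follows essentially the same route as the paper's proof: average over the $\exp(1)$ fading marks via their MGF, apply the PGFL of the HPPP outside $B(0,r)$ in polar coordinates, and for $\alpha=4$ substitute $\tau=t^2$ and use the $\arctan$ antiderivative. Your added remarks (the $G_i$ are i.i.d.\ and independent of the point process, and the integral converges for $\alpha>2$ with $r\ge d_0>0$) are harmless refinements that the paper leaves implicit.
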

	\begin{proof}
		See Appendix~\ref{app:coverage_proofs}.
	\end{proof}
	
	Let $\theta > 0$ denote the SINR coverage threshold. Conditioned on the serving distance $R_0 = r$, the coverage event $\{\mathrm{SINR}_0 > \theta\}$ from \eqref{eq:SINR_final} can be rewritten as
	\begin{equation}\label{eq:coverage_event}
		H_0 > s(r,\theta)\!\left(I(r) + \sigma_{\mathrm{eff}}^2(r)\right),
	\end{equation}
	where the normalized threshold is defined as
	\begin{equation}\label{eq:s_def}
		s(r,\theta) \triangleq \frac{\theta}{P_0\,\beta_{00}} = \frac{\theta\,r^{\alpha}}{P_0\,C}.
	\end{equation}
	
	Since $H_0 \sim \Gamma(M_r, 1)$ with integer shape parameter $M_r$, its exact CCDF is $\mathbb{P}[H_0 > x] = e^{-x}\sum_{n=0}^{M_r-1} x^n/n!$. Substituting this sum form into the coverage integral requires evaluating successive derivatives of $\mathcal{L}_{I|R_0}$, which is analytically cumbersome for the $\alpha = 4$ case, since the $\arctan$ structure in \eqref{eq:laplace_closed} makes higher-order derivatives progressively unwieldy as $M_r$ grows. We therefore adopt the following proposition which allows the coverage probability to be evaluated directly through the Laplace transform.
	
	\begin{proposition}[Gamma-CCDF Approximation]\label{prop:gamma_approx}
		The CCDF of $H_0 \sim \Gamma(M_r, 1)$ is approximated as
		\begin{equation}\label{eq:Gamma_approx}
			\mathbb{P}[H_0 > x] \approx 1 - \left(1 - e^{-\nu x}\right)^{M_r},
		\end{equation}
		where $\nu = \alpha_{\mathrm{ks}}/M_r$ minimizes the Kolmogorov--Smirnov~(K-S) distance to the true $\Gamma(M_r, 1)$ cumulative distribution function (CDF):
		\begin{equation}\label{eq:alpha_ks}
			\alpha_{\mathrm{ks}} = \arg\min_{\xi > 0} \sup_{x \geq 0} \left| F_{\bar{H}_0}(x) - \left(1 - e^{-\xi x}\right)^{M_r} \right|,
		\end{equation}
		with $\bar{H}_0 = H_0/M_r \sim \Gamma(M_r, 1/M_r)$. For $M_r = 1$, $\alpha_{\mathrm{ks}} = 1$ and the approximation is exact. 
		The classical Alzer constant $(M_r!)^{-1/M_r}$~\cite{Alzer1997} provides an analytical upper bound; however, it is not optimal under the K-S distance criterion adopted here.
	\end{proposition}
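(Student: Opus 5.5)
The proposition is partly a definition and partly three checkable claims. The definition is to approximate the CCDF by $1-(1-e^{-\nu x})^{M_r}$ with $\nu=\alpha_{\mathrm{ks}}/M_r$. The claims are: (i) the minimizer $\alpha_{\mathrm{ks}}$ in \eqref{eq:alpha_ks} is well defined; (ii) for $M_r=1$ one has $\alpha_{\mathrm{ks}}=1$ and the approximation is exact; (iii) the Alzer constant $(M_r!)^{-1/M_r}$ gives an analytical bound but is not K-S optimal. My plan is to verify these in order and then check consistency of the scaling between $H_0$ and $\bar H_0$.

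\textbf{Scaling.} Since $\bar H_0=H_0/M_r$, we have $\mathbb{P}[H_0>x]=\mathbb{P}[\bar H_0>x/M_r]$. If $F_{\bar H_0}(y)\approx(1-e^{-\xi y})^{M_r}$, then substituting $y=x/M_r$ gives $\mathbb{P}[H_0>x]\approx1-(1-e^{-(\xi/M_r)x})^{M_r}$. This is \eqref{eq:Gamma_approx} with $\nu=\alpha_{\mathrm{ks}}/M_r$. Moreover, the K-S distance is invariant under this deterministic rescaling, so optimizing over $\bar H_0$ is equivalent to optimizing over $H_0$.

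\textbf{Existence of the minimizer.} Define $D(\xi)=\sup_{y\ge0}|F_{\bar H_0}(y)-(1-e^{-\xi y})^{M_r}|$. I will show that $D$ is continuous on $(0,\infty)$. The family of CDFs is monotone in $\xi$ and uniformly equicontinuous on compacts, and both CDFs tend to $1$ at infinity, which controls the tails. I will then show that $D(\xi)\to1$ both as $\xi\to0^+$ and as $\xi\to\infty$. As $\xi\to0^+$ the approximating CDF vanishes pointwise, so the distance approaches $1$. As $\xi\to\infty$ the approximating CDF jumps to $1$ immediately, while $F_{\bar H_0}$ stays small near $0$, so again the distance approaches $1$. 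Together with $D<1$ at some interior $\xi$, the infimum is attained on a compact interval.

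\textbf{The case $M_r=1$.} Here $\bar H_0=H_0\sim\exp(1)$, so $F_{\bar H_0}(y)=1-e^{-y}$. The choice $\xi=1$ makes $D=0$, and $\xi=1$ is the unique minimizer because distinct exponential CDFs differ. Hence the approximation is exact.

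\textbf{The Alzer comparison.} I will cite Alzer's inequality \cite{Alzer1997}: for $a=(M_r!)^{-1/M_r}$, the normalized lower incomplete gamma function satisfies $P(M_r,x)\le(1-e^{-ax})^{M_r}$. Translated to $H_0$, this makes $1-(1-e^{-ax})^{M_r}$ a lower bound on the exact CCDF, i.e., an upper bound on the CDF. For non-optimality, a single counterexample suffices, e.g.\ $M_r=2$. There I will compute $D$ at the Alzer value and at a perturbed $\xi$, and show that the perturbation strictly lowers $D$. This works because the one-sided Alzer bound has nonzero error on only one side, so shifting $\xi$ toward the true curve balances the error and reduces the supremum.

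\textbf{Main obstacle.} The hardest step is the non-optimality claim, since no closed form for $\alpha_{\mathrm{ks}}$ exists. My first attempt is a perturbation argument: show that $D$ is one-sided at Alzer's $a$ and that its derivative in $\xi$ is nonzero there. Failing that, I will fall back on an explicit numeric evaluation for small $M_r$.
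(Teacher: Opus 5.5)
The paper gives no proof of this proposition; it is used as a modeling device, and Alzer's result is only cited. Your plan therefore stands on its own. Your scaling step, the existence argument for the minimizer, and the $M_r=1$ case are correct. The genuine error is in the Alzer step: the inequality you quote points the wrong way. For integer $M_r\ge 2$ and $a=(M_r!)^{-1/M_r}<1$, Alzer's inequality is
\begin{equation*}
\left(1-e^{-ax}\right)^{M_r} < P(M_r,x) < \left(1-e^{-x}\right)^{M_r}, \qquad x>0,
\end{equation*}
so the Alzer curve lies strictly \emph{below} the true CDF. You can confirm this for $M_r=2$. Near the origin, $P(2,x)=x^2/2-x^3/3+O(x^4)$, whereas $(1-e^{-x/\sqrt{2}})^2=x^2/2-2^{-3/2}x^3+O(x^4)$. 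At $x=2$ one has $P(2,2)=1-3e^{-2}\approx0.594>(1-e^{-\sqrt{2}})^2\approx0.573$. (The right-hand inequality is also immediate: the maximum of $M_r$ i.i.d.\ $\exp(1)$ variables never exceeds their sum.) Hence $1-(1-e^{-ax})^{M_r}$ is an \emph{upper} bound on the CCDF $\mathbb{P}[H_0>x]$, and therefore on the conditional coverage after averaging over $I(r)$. This is the sense of the paper's ``analytical upper bound.'' Your claim that it lower-bounds the CCDF (i.e., upper-bounds the CDF) is false, so that half of claim (iii) would be established incorrectly.

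The sign error would also send your non-optimality perturbation the wrong way, since decreasing $\xi$ from the Alzer point strictly increases $D$. With the correct direction you need neither differentiability of $D$ nor numerics, and you get the result for every $M_r\ge2$ rather than from one counterexample. In $\bar H_0$ coordinates the Alzer point is $\xi_A=M_r(M_r!)^{-1/M_r}$; the value $(M_r!)^{-1/M_r}$ is the rate for $H_0$. The function $G_\xi(y)=(1-e^{-\xi y})^{M_r}$ is strictly increasing in $\xi$ for every $y>0$, and $G_{\xi_A}<F_{\bar H_0}$ on $(0,\infty)$. So for small $\epsilon>0$, the positive part of $F_{\bar H_0}-G_{\xi_A+\epsilon}$ has a strictly smaller attained maximum than $D(\xi_A)$. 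Its negative part is at most $\sup_y|G_{\xi_A+\epsilon}(y)-G_{\xi_A}(y)|$, which tends to $0$ by the same uniform continuity you use for continuity of $D$. Thus $D(\xi_A+\epsilon)<D(\xi_A)$. The two-sided inequality further gives $D(\xi)\ge D(\xi_A)$ for $\xi\le\xi_A$ and $D(\xi)\ge D(M_r)$ for $\xi\ge M_r$. The symmetric perturbation at $\xi=M_r$ then shows every minimizer lies in $(\xi_A,M_r)$, i.e., $\nu\in\big((M_r!)^{-1/M_r},1\big)$. This also yields existence directly on a compact interval, without your limits at $0^+$ and $\infty$.
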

	
	Since $H_0$ and $I(r)$ are conditionally independent given $r$ as $H_0$ depends on $\|\mathbf{g}_{00}\|$ while $I(r)$ depends on the independent direction $\mathbf{t}_0$ and $\sigma_{\mathrm{eff}}^2(r)$ is deterministic conditioned on $r$, expanding \eqref{eq:Gamma_approx} via the binomial theorem yields the following result.
	
	\begin{theorem}[Conditional Coverage Probability]\label{thm:cond_coverage}
		The coverage probability conditioned on $R_0 = r$ is
		\begin{equation}\label{eq:pcov_cond}
			p_{\mathrm{cov}}(\theta|r) \approx \sum_{k=1}^{M_r}\binom{M_r}{k}(-1)^{k+1} e^{-k\nu\,s\,\sigma_{\mathrm{eff}}^2(r)}\,\mathcal{L}_{I|R_0}\!\big(k\nu\,s\big|r\big),
		\end{equation}
		where $s \equiv s(r,\theta) = \theta\,r^{\alpha}/(P_0 C)$, $\sigma_{\mathrm{eff}}^2(r)$ is given by \eqref{eq:sigma_eff}, $\nu = \alpha_{\mathrm{ks}}/M_r$ with $\alpha_{\mathrm{ks}}$ defined in \eqref{eq:alpha_ks}, and $\mathcal{L}_{I|R_0}(\cdot|r)$ is given by \eqref{eq:laplace_closed}. When $M_r = 1$, the approximation is exact and reduces to $p_{\mathrm{cov}}(\theta|r) = e^{-s\,\sigma_{\mathrm{eff}}^2(r)}\,\mathcal{L}_{I|R_0}(s|r)$.
	\end{theorem}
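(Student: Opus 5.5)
The plan is to start from the coverage event \eqref{eq:coverage_event}. Conditioned on $R_0=r$, coverage is the event $\{H_0 > s(I(r)+\sigma_{\mathrm{eff}}^2(r))\}$, with $s=s(r,\theta)$ and $\sigma_{\mathrm{eff}}^2(r)$ both deterministic given $r$. We then condition additionally on $I(r)$. Since $H_0$ is independent of $I(r)$ given $r$ (as argued just before the theorem), we can write
\begin{equation*}
p_{\mathrm{cov}}(\theta|r)=\mathbb{E}_{I|r}\!\left[\mathbb{P}\!\left[H_0> s\big(I(r)+\sigma_{\mathrm{eff}}^2(r)\big)\,\middle|\, I(r), r\right]\right],
\end{equation*}
and evaluate the inner probability with the CCDF of $H_0$ at the random argument $x=s(I+\sigma_{\mathrm{eff}}^2)$.

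Next we insert the approximation of Proposition~\ref{prop:gamma_approx}, so the inner CCDF becomes $1-(1-e^{-\nu x})^{M_r}$. The binomial theorem gives
\begin{equation*}
1-(1-e^{-\nu x})^{M_r}=\sum_{k=1}^{M_r}\binom{M_r}{k}(-1)^{k+1}e^{-k\nu x},
\end{equation*}
where the $k=0$ term cancels the leading $1$. Substituting $x=s(I+\sigma_{\mathrm{eff}}^2)$ factorizes each summand as $e^{-k\nu s\sigma_{\mathrm{eff}}^2(r)}\,e^{-k\nu s I(r)}$, and the first factor is deterministic given $r$.

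Because the sum is finite, expectation and summation can be exchanged. Then $\mathbb{E}[e^{-k\nu s I(r)}\mid r]$ is by definition $\mathcal{L}_{I|R_0}(k\nu s|r)$, which Lemma~\ref{lem:laplace} supplies (in closed form for $\alpha=4$). This yields \eqref{eq:pcov_cond}.

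For $M_r=1$, the sum has the single term $k=1$, and Proposition~\ref{prop:gamma_approx} gives $\nu=\alpha_{\mathrm{ks}}=1$. The approximation is then the exact exponential CCDF, which gives the exact expression $e^{-s\sigma_{\mathrm{eff}}^2}\mathcal{L}_{I|R_0}(s|r)$.

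The only delicate point is the conditional independence of $H_0=\|\mathbf{g}_{00}\|^2$ and $I(r)$. We would justify it from the fact that for a circularly symmetric Gaussian vector, the norm and the direction $\mathbf{t}_0$ are independent. Given $\mathbf{t}_0$, the $G_i$ are i.i.d.\ $\exp(1)$ regardless of $\mathbf{t}_0$, so $I(r)$ is independent of both $\mathbf{t}_0$ and $H_0$.

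The resulting expression is approximate for $M_r>1$ for two reasons: the Gamma-CCDF approximation, and the Bussgang and Gaussian-input approximations already built into $\sigma_{\mathrm{eff}}^2(r)$. Beyond these, no additional approximation is introduced.
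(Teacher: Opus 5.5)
Your proposal is correct and takes the same route as the paper's proof. You apply the Gamma-CCDF approximation inside the conditional expectation over $I(r)$, expand binomially, pull out the deterministic factor $e^{-k\nu s\sigma_{\mathrm{eff}}^2(r)}$, and identify the remaining expectation as $\mathcal{L}_{I|R_0}(k\nu s|r)$. Your explicit justification of the conditional independence of $H_0$ and $I(r)$ is a small but useful addition, since the paper only asserts it in the text before the theorem.
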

	\begin{proof}
		See Appendix~\ref{app:coverage_proofs}.
	\end{proof}
	
	Averaging over the serving distance with the conditional PDF \eqref{eq:f_R0} yields the network-level coverage probability.
	
	\begin{theorem}[Network Coverage Probability]\label{thm:uncond_coverage}
		The coverage probability conditioned on $R_0 \geq d_0$ is
		\begin{equation}\label{eq:pcov_final}
			p_{\mathrm{cov}}(\theta) = \int_{d_0}^{\infty} p_{\mathrm{cov}}(\theta|r)\cdot f_{R_0}(r)\,dr,
		\end{equation}
		where $p_{\mathrm{cov}}(\theta|r)$ is given by \eqref{eq:pcov_cond} and $f_{R_0}(r)$ is defined in \eqref{eq:f_R0}.
	\end{theorem}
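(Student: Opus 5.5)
The plan is to apply the law of total probability to the serving distance. First I would make clear that the coverage event $\{\mathrm{SINR}_0>\theta\}$ is measurable with respect to $(R_0,H_0,\Phi_u^{(\mathrm{int})},\{G_i\})$. By Theorem~\ref{thm:SINR}, once $R_0=r$ is fixed, the only remaining randomness in $\mathrm{SINR}_0$ comes from $H_0$ and from $I(r)$. The effective noise $\sigma_{\mathrm{eff}}^2(r)$ is deterministic given $r$, because Lemma~\ref{lem:input_dist} and Lemma~\ref{lem:bussgang} compute the Bussgang parameters from the conditional second moment $\bar{\sigma}_{\mathrm{in}}^2(r)$ and not from the realization. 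The conditional probability $\mathbb{P}[\mathrm{SINR}_0>\theta\mid R_0=r]$ is therefore exactly the quantity $p_{\mathrm{cov}}(\theta|r)$ that Theorem~\ref{thm:cond_coverage} evaluates.

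Next I would write $p_{\mathrm{cov}}(\theta)=\mathbb{P}[\mathrm{SINR}_0>\theta\mid R_0\ge d_0]=\mathbb{E}\big[\mathbb{P}[\mathrm{SINR}_0>\theta\mid R_0]\,\big|\,R_0\ge d_0\big]$. The law of $R_0$ given $R_0\ge d_0$ has the density \eqref{eq:f_R0}. This density comes from the nearest-neighbour CCDF $\mathbb{P}[R_0>r]=e^{-\pi\lambda_b r^2}$ divided by $\mathbb{P}[R_0\ge d_0]=e^{-\pi\lambda_b d_0^2}$ and then differentiated. I would check briefly that it integrates to one over $[d_0,\infty)$. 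Writing the expectation as an integral against $f_{R_0}$ then gives \eqref{eq:pcov_final}.

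The last step is to justify the integral: measurability in $r$, and exchanging conditional expectation with integration. Since $0\le p_{\mathrm{cov}}(\theta|r)\le 1$, the integrand is dominated by $f_{R_0}$, which is integrable, so the integral is finite and Fubini/Tonelli applies. Continuity of $p_{\mathrm{cov}}(\theta|r)$ in $r$ follows from \eqref{eq:pcov_cond}, since $\sigma_{\mathrm{eff}}^2(r)$ and $\mathcal{L}_{I|R_0}$ are continuous in $r$ whenever $\kappa(r)\neq 0$.

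The main obstacle I expect is this caveat about $\kappa(r)$. If $c_1$ and $c_3$ have opposite signs, then $\kappa(r)=c_1+2c_3\bar{\sigma}_{\mathrm{in}}^2(r)$ can vanish at some $r$, and there $\sigma_{\mathrm{eff}}^2\to\infty$. I would handle this by interpreting $e^{-k\nu s\sigma_{\mathrm{eff}}^2}=0$ at such points. This is the natural limit and keeps the integrand bounded, and the zeros of $\kappa$ form a null set, so the integral is unaffected. Apart from this point, the statement is a direct consequence of total probability, and the approximation sign carries over from Theorem~\ref{thm:cond_coverage}.
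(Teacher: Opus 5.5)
Your proposal is correct and follows the paper's route. The paper proves the statement in a single line, averaging the conditional coverage of Theorem~\ref{thm:cond_coverage} over the serving-distance density \eqref{eq:f_R0}, which is the law of total probability; your extra checks (normalization of $f_{R_0}$, boundedness of the integrand, and setting the exponential term to zero at a possible zero of $\kappa(r)$ when $c_1c_3<0$) are sound details the paper leaves implicit. That zero is at most a single point, since $\bar{\sigma}_{\mathrm{in}}^2(r)$ is strictly decreasing in $r$.
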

	
	Substituting \eqref{eq:pcov_cond} into \eqref{eq:pcov_final} and interchanging the sum and integral:
	\begin{align}\label{eq:pcov_expanded}
		p_{\mathrm{cov}}(\theta) = &\sum_{k=1}^{M_r}\binom{M_r}{k}(-1)^{k+1} \nonumber\\
		&\times \int_{d_0}^{\infty} e^{-k\nu\,s\,\sigma_{\mathrm{eff}}^2(r)}\,\mathcal{L}_{I|R_0}(k\nu\,s\,|\,r)\,f_{R_0}(r)\,dr.
	\end{align}

	\section{Special Cases and Extensions}\label{sec:special_cases}
	
	In this section, we specialize the general coverage analysis of Section~\ref{sec:SG_framework} to two limiting cases that reveal the fundamental tradeoffs of RAQR-based networks, and extend the analysis to account for coupling-induced receive correlation in conventional antenna arrays.
	
	\subsection{Conventional Receiver Benchmark}\label{subsec:conv_benchmark}
	
	To quantify the coverage gain offered by the RAQR, we first establish a conventional receiver baseline. The conventional linear receiver corresponds to the special case $c_3 = 0$ (no nonlinear transduction) with fixed thermal noise $\sigma_n^2 = k_B T_{\mathrm{sys}} F B$, where $k_B$ is the Boltzmann constant, $T_{\mathrm{sys}}$ is the system noise temperature, $F$ is the receiver noise figure, and $B$ is the signal bandwidth.
	
	\begin{corollary}[Conventional Receiver SINR]\label{cor:sinr_conv}
		Setting $c_3 = 0$ in Theorem~\ref{thm:SINR} yields $\kappa = c_1$, $\sigma_d^2 = 0$, and
		\begin{equation}\label{eq:SINR_conv}
			\mathrm{SINR}_0^{(\mathrm{conv})} = \frac{P_0\,\beta_{00}\,H_0}{\displaystyle\sum_{\mathbf{x}_i \in \Phi_u^{(\mathrm{int})}} P_u\,\beta_{0i}\,G_i + \sigma_n^2},
		\end{equation}
		where $H_0 \sim \Gamma(M_r, 1)$ and $G_i \sim \exp(1)$.
	\end{corollary}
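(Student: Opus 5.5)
The corollary follows by direct specialization of Theorem~\ref{thm:SINR}, with one modeling substitution for the noise term. First, set $c_3=0$ in Lemma~\ref{lem:bussgang}. By \eqref{eq:kappa}, $\kappa(r)=c_1+2\cdot 0\cdot\bar{\sigma}_{\mathrm{in}}^2(r)=c_1$, which no longer depends on $r$. By \eqref{eq:sigma_d}, $\sigma_d^2(r)=2\cdot 0\cdot(\bar{\sigma}_{\mathrm{in}}^2(r))^3=0$. With $c_3=0$ the map $g(u)=c_1u$ is linear, so the Bussgang decomposition is exact: $\mathbf{d}=\mathbf{0}$ almost surely. The Gaussian-input approximation of Lemma~\ref{lem:input_dist} is therefore not needed in this case, which is worth stating explicitly.

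Next, substitute these values into \eqref{eq:sigma_eff}. This gives $\sigma_{\mathrm{eff}}^2=\sigma_w^2/(|c_1|^2K^2)$, which is the receiver noise referred back to the input of the linear front end. For the conventional receiver, the photodetection noise $\sigma_w^2$ is replaced by the thermal noise of the electronic chain. Referred to the input, this noise is $\sigma_n^2=k_BT_{\mathrm{sys}}FB$. Setting $\sigma_{\mathrm{eff}}^2=\sigma_n^2$ amounts to normalizing the front-end gain $|c_1|^2K^2$ into the input-referred noise definition, exactly as in the division step of the proof of Theorem~\ref{thm:SINR}. The gain cancels between numerator and denominator, so it does not otherwise appear.

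Finally, expand $I(r)$ using its definition \eqref{eq:I_def} as $\sum_{\mathbf{x}_i\in\Phi_u^{(\mathrm{int})}}P_u\beta_{0i}G_i$. The fading statistics are inherited unchanged: $H_0\sim\Gamma(M_r,1)$ from \eqref{eq:H0_def} and $G_i\sim\exp(1)$ from the isotropy argument around \eqref{eq:cross_channel}. Neither depends on the front-end nonlinearity. Combining these steps yields \eqref{eq:SINR_conv}.

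The only delicate step is justifying the identification of the input-referred noise with $\sigma_n^2$. It is a modeling convention rather than a derivation, so I would state it as the definition of the benchmark. All the remaining steps are substitutions.
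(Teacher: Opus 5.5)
Your proposal is correct and follows the paper's route. The paper likewise obtains the corollary by setting $c_3=0$ in Lemma~\ref{lem:bussgang} and Theorem~\ref{thm:SINR} (giving $\kappa=c_1$, $\sigma_d^2=0$), and it identifies the input-referred noise with the thermal benchmark $\sigma_n^2=k_BT_{\mathrm{sys}}FB$ by definition in Section~\ref{subsec:conv_benchmark}; your extra remark that the Bussgang step is exact, so the Gaussian-input approximation is unnecessary, is a correct and slightly sharper observation.
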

	
	\begin{corollary}[Conventional Receiver Coverage]\label{cor:pcov_conv}
		The conditional coverage probability of the conventional receiver is
		\begin{multline}\label{eq:pcov_conv}
			p_{\mathrm{cov}}^{(\mathrm{conv})}(\theta|r)
			= \sum_{k=1}^{M_r}\binom{M_r}{k}(-1)^{k+1}
			e^{-k\nu\,s(r,\theta)\,\sigma_n^2}\\
			\times\mathcal{L}_{I|R_0}\!\big(k\nu\,s(r,\theta)\big|r\big),
		\end{multline}
		where $\mathcal{L}_{I|R_0}(\cdot|r)$ is given by \eqref{eq:laplace_closed}.
	\end{corollary}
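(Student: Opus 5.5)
The plan is to treat Corollary~\ref{cor:pcov_conv} as a direct specialization of Theorem~\ref{thm:cond_coverage}, and to check that every ingredient of that theorem's derivation survives when the RAQR front end is replaced by the linear receiver. By Corollary~\ref{cor:sinr_conv}, the conventional SINR \eqref{eq:SINR_conv} has exactly the structure of \eqref{eq:SINR_final}: the same desired term $P_0\beta_{00}H_0$ with $H_0\sim\Gamma(M_r,1)$, the same interference $I(r)$ built from the HPPP $\Phi_u^{(\mathrm{int})}$ outside $B(0,r)$ with $G_i\sim\exp(1)$, and a deterministic noise term. The only change is that $\sigma_{\mathrm{eff}}^2(r)$ is replaced by the constant $\sigma_n^2$.

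For the noise term, I would note that the thermal noise is specified directly at the input (input-referred). Equivalently, with $c_3=0$ Lemma~\ref{lem:bussgang} gives $\kappa=c_1$ and $\sigma_d^2=0$, and one identifies the input-referred noise $\sigma_w^2/(|c_1|^2K^2)$ with $\sigma_n^2$. This identification is a modeling convention rather than a derivation, and it is the one step worth stating explicitly so that the notation matches \eqref{eq:pcov_conv}.

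Next I would rewrite the coverage event as in \eqref{eq:coverage_event}, namely $H_0>s(r,\theta)\,(I(r)+\sigma_n^2)$ with $s$ from \eqref{eq:s_def}. I then apply Proposition~\ref{prop:gamma_approx}, expanding $1-(1-e^{-\nu x})^{M_r}=\sum_{k=1}^{M_r}\binom{M_r}{k}(-1)^{k+1}e^{-k\nu x}$ via the binomial theorem. Conditioned on $r$, $H_0$ is independent of $I(r)$, because $I(r)$ depends on $\mathbf{g}_{00}$ only through the direction $\mathbf{t}_0$, and $\sigma_n^2$ is deterministic. Taking the expectation over $I(r)$ term by term therefore factors each summand as
\[
e^{-k\nu s\sigma_n^2}\,\mathbb{E}\big[e^{-k\nu s I(r)}\big]=e^{-k\nu s\sigma_n^2}\,\mathcal{L}_{I|R_0}(k\nu s\,|\,r).
\]
Here Lemma~\ref{lem:laplace} supplies the Laplace transform, which takes the closed form \eqref{eq:laplace_closed} for $\alpha=4$.

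The only potential obstacle is justifying that $\mathcal{L}_{I|R_0}$ is unchanged from the RAQR case. It is: $I(r)$ in \eqref{eq:SINR_conv} is literally the same random variable as in \eqref{eq:I_def}, since the linear gain $c_1$ cancels in the SINR exactly as $\kappa$ did. The rest is bookkeeping, and for $M_r=1$ the formula reduces to the exact expression because $\nu=1$.
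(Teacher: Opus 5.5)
Your proposal is correct and follows essentially the paper's route. The paper obtains the corollary by specializing Theorem~\ref{thm:cond_coverage}, whose proof in Appendix~\ref{app:coverage_proofs} uses the coverage-event rewriting, the Gamma-CCDF approximation with binomial expansion, and the conditional independence of $H_0$ and $I(r)$ exactly as you do, with $\sigma_{\mathrm{eff}}^2(r)$ replaced by $\sigma_n^2$. As you implicitly note, the ``$=$'' in \eqref{eq:pcov_conv} really means the K-S approximation ``$\approx$'' except when $M_r=1$.
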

	
	Comparing \eqref{eq:pcov_conv} with the RAQR coverage probability in \eqref{eq:pcov_cond}, the two expressions share the same Laplace transform $\mathcal{L}_{I|R_0}$ and differ only in the noise exponent: the conventional receiver uses the constant $\sigma_n^2$, while the RAQR uses the distance-dependent $\sigma_{\mathrm{eff}}^2(r)$. This difference captures the core tradeoff studied in this paper: at short serving distances, the Bussgang distortion inflates $\sigma_{\mathrm{eff}}^2(r)$ well above $\sigma_n^2$, penalizing the RAQR; at large distances, the distortion vanishes and $\sigma_{\mathrm{eff}}^2(r)$ drops far below $\sigma_n^2$, giving the RAQR an advantage.

	\subsection{Noise-Limited Regime}\label{subsec:noise_limited}
	
	When $\lambda_u \to 0$, the aggregate interference vanishes and $\mathcal{L}_{I|R_0}(s|r) \to 1$. Define the nonlinearity ratio
	\begin{equation}\label{eq:rho_NL}
		\rho_{\mathrm{NL}}(r) \triangleq \frac{|c_3|\,\bar{\sigma}_{\mathrm{in}}^2(r)}{|c_1|},
	\end{equation}
	which quantifies the relative strength of the third-order distortion to the linear gain. In the noise-limited regime, $\bar{\sigma}_{\mathrm{in}}^2(r)$ is dominated by the desired signal term $K^2 P_0 C\,r^{-\alpha}$, and $\rho_{\mathrm{NL}}(r) \ll 1$ for $r \geq d_0$, so the RAQR operates in the linear regime with $\kappa(r) \approx c_1$ and $\sigma_d^2(r) \approx 0$. Define the RAQR intrinsic noise power
	\begin{equation}\label{eq:sigma_RAQR}
		\sigma_{\mathrm{RAQR}}^2 \triangleq \frac{\sigma_w^2}{|c_1|^2\,K^2}.
	\end{equation}
	
	\begin{corollary}[Noise-Limited Coverage]\label{cor:noise_lim}
		In the noise-limited regime, the conditional coverage probability \eqref{eq:pcov_cond} reduces to
		\begin{equation}\label{eq:pcov_noise_lim}
			p_{\mathrm{cov}}^{(\mathrm{noise})}(\theta|r)
			= \sum_{k=1}^{M_r}\binom{M_r}{k}(-1)^{k+1}
			e^{-k\nu\,s(r,\theta)\,\sigma_{\mathrm{eff}}^2},
		\end{equation}
		where $\sigma_{\mathrm{eff}}^2 = \sigma_{\mathrm{RAQR}}^2$ for the RAQR and $\sigma_{\mathrm{eff}}^2 = \sigma_n^2$ for the conventional receiver.
	\end{corollary}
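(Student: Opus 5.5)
The plan is to start from the conditional coverage expression \eqref{eq:pcov_cond} of Theorem~\ref{thm:cond_coverage} and take the limit $\lambda_u \to 0$ term by term. There are two ingredients: the interference Laplace transform, and the effective noise $\sigma_{\mathrm{eff}}^2(r)$ of \eqref{eq:sigma_eff}. Both depend on $\lambda_u$, so they are handled separately. The sum over $k$ is finite, with $M_r$ terms, so the limit may be taken inside the sum without further justification.

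For the Laplace transform, use the general form \eqref{eq:laplace_general}. For fixed $s>0$ and $r\ge d_0$, the integrand satisfies $sP_uCt^{-\alpha}t/(1+sP_uCt^{-\alpha}) \le sP_uC\,t^{1-\alpha}$. This bound is integrable on $[r,\infty)$ because $\alpha>2$. The exponent is therefore bounded in magnitude by $2\pi\lambda_u sP_uC\,r^{2-\alpha}/(\alpha-2)$, which vanishes as $\lambda_u\to 0$. Hence $\mathcal{L}_{I|R_0}(k\nu s|r)\to 1$ for every $k$. (One could equally use the $\alpha=4$ closed form \eqref{eq:laplace_closed}, where the prefactor $\pi\lambda_u\sqrt{\zeta}$ goes to zero.)

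For the effective noise, use Lemma~\ref{lem:input_dist}. As $\lambda_u\to0$, $\bar{\sigma}_{\mathrm{in}}^2(r)\to K^2P_0Cr^{-\alpha}\le K^2P_0Cd_0^{-\alpha}$. By Lemma~\ref{lem:bussgang}, $\kappa(r)=c_1\bigl(1+2(c_3/c_1)\bar{\sigma}_{\mathrm{in}}^2\bigr)$ and $\sigma_d^2(r)=2|c_3|^2(\bar{\sigma}_{\mathrm{in}}^2)^3$. In terms of the ratio $\rho_{\mathrm{NL}}$ of \eqref{eq:rho_NL}, this gives $|\kappa|^2=|c_1|^2(1+O(\rho_{\mathrm{NL}}))$ and $\sigma_d^2=2|c_1|^2\rho_{\mathrm{NL}}^2\bar{\sigma}_{\mathrm{in}}^2$. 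Under the stated regime assumption $\rho_{\mathrm{NL}}(r)\ll1$ for $r\ge d_0$, keep leading order in $\rho_{\mathrm{NL}}$. Also note that $\sigma_d^2$ is negligible relative to $\sigma_w^2$ under the same assumption. Then $\sigma_{\mathrm{eff}}^2(r)\approx\sigma_w^2/(|c_1|^2K^2)=\sigma_{\mathrm{RAQR}}^2$ from \eqref{eq:sigma_RAQR}. Substituting both limits into \eqref{eq:pcov_cond} gives \eqref{eq:pcov_noise_lim} with $\sigma_{\mathrm{eff}}^2=\sigma_{\mathrm{RAQR}}^2$.

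The conventional-receiver case follows the same way from Corollary~\ref{cor:pcov_conv}, which already has the constant noise $\sigma_n^2$; only the Laplace limit is needed. The main obstacle is the middle step, because the reduction $\sigma_{\mathrm{eff}}^2\to\sigma_{\mathrm{RAQR}}^2$ is not an exact limit. The distortion driven by the desired signal, of order $r^{-3\alpha}$, remains even when $\lambda_u=0$. The statement therefore has to be read as holding to leading order in $\rho_{\mathrm{NL}}$, using the regime hypothesis stated just before the corollary. I would make this explicit by writing $\sigma_{\mathrm{eff}}^2(r)=\sigma_{\mathrm{RAQR}}^2\bigl(1+O(\rho_{\mathrm{NL}}(r))\bigr)+O\bigl(\rho_{\mathrm{NL}}(r)^2\,\bar{\sigma}_{\mathrm{in}}^2(r)/K^2\bigr)$. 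The error terms are uniformly small for $r\ge d_0$, since $\rho_{\mathrm{NL}}(r)\le\rho_{\mathrm{NL}}(d_0)$.
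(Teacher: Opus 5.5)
Your proposal is correct and follows the paper's own route. You let $\lambda_u\to0$ so that $\mathcal{L}_{I|R_0}\to1$, use $\rho_{\mathrm{NL}}(r)\ll1$ to replace $\kappa(r)$ by $c_1$ and drop $\sigma_d^2(r)$, and treat the conventional case with the Laplace limit alone; your error bookkeeping is more explicit than the paper's one-line assertion, and your reading of the result as a leading-order statement is the right one.

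One small correction: $\rho_{\mathrm{NL}}\ll1$ alone does not imply $\sigma_d^2\ll\sigma_w^2$. Their ratio is $2\rho_{\mathrm{NL}}^2$ times the per-element input SNR $|c_1|^2\bar{\sigma}_{\mathrm{in}}^2/\sigma_w^2$, which can be very large for a low-noise RAQR. You do not need that claim, though. In the coverage exponent the distortion contributes only $k\nu\,s(r,\theta)\,\sigma_d^2/(|\kappa|^2K^2)\approx 2k\nu\theta\,\rho_{\mathrm{NL}}^2(r)$, and this is what actually makes your second error term uniformly negligible for fixed $\theta$.
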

	
	\begin{proposition}[RAQR Advantage in the Noise-Limited Regime]\label{prop:noise_lim_ratio}
		For $M_r = 1$, the coverage-gain ratio is
		\begin{equation}\label{eq:cov_ratio_noise}
			\frac{p_{\mathrm{cov}}^{(\mathrm{RAQR})}(\theta|r)}{p_{\mathrm{cov}}^{(\mathrm{conv})}(\theta|r)}
			= \exp\!\left(s(r,\theta)\left(\sigma_n^2 - \sigma_{\mathrm{RAQR}}^2\right)\right) > 1.
		\end{equation}
		
		For general $M_r$, the inequality $p_{\mathrm{cov}}^{(\mathrm{RAQR})} > p_{\mathrm{cov}}^{(\mathrm{conv})}$ holds since every term in the sum \eqref{eq:pcov_noise_lim} satisfies $e^{-k\nu s\,\sigma_{\mathrm{RAQR}}^2} > e^{-k\nu s\,\sigma_n^2}$.
	\end{proposition}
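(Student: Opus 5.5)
The plan is to work directly from the noise-limited expression \eqref{eq:pcov_noise_lim} in Corollary~\ref{cor:noise_lim}. I will use it for both receivers with the same threshold $s=s(r,\theta)=\theta r^{\alpha}/(P_0C)>0$ and the same $\nu=\alpha_{\mathrm{ks}}/M_r>0$. The two receivers differ only in the constant $\sigma_{\mathrm{eff}}^2$, which is $\sigma_{\mathrm{RAQR}}^2$ from \eqref{eq:sigma_RAQR} or $\sigma_n^2$. The only physical input I need, and will state explicitly as the standing assumption of this subsection, is $\sigma_{\mathrm{RAQR}}^2<\sigma_n^2$. This is the low-noise premise of the RAQR: the photodetection noise referred back to the field input through $|c_1|^2K^2$ is below the thermal floor $k_BT_{\mathrm{sys}}FB$. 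Without it the strict inequality cannot hold, so the proof is conditional on it.

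For $M_r=1$, Theorem~\ref{thm:cond_coverage} states that the approximation is exact with $\nu=1$. The sum \eqref{eq:pcov_noise_lim} then collapses to the single term $e^{-s\sigma_{\mathrm{eff}}^2}$. Taking the ratio of the RAQR and conventional expressions gives $\exp\!\big(s(\sigma_n^2-\sigma_{\mathrm{RAQR}}^2)\big)$ immediately. This exceeds one because $s>0$ and $\sigma_n^2-\sigma_{\mathrm{RAQR}}^2>0$.

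For general $M_r$, the statement's term-by-term justification is not valid as written. The summands carry alternating signs $(-1)^{k+1}$, so making each exponential larger need not make the sum larger. I will therefore not compare terms. Instead I will undo the binomial expansion used to obtain Theorem~\ref{thm:cond_coverage}: $\sum_{k=1}^{M_r}\binom{M_r}{k}(-1)^{k+1}e^{-k\nu s\sigma^2}=1-\big(1-e^{-\nu s\sigma^2}\big)^{M_r}$. This recovers the approximating CCDF \eqref{eq:Gamma_approx} evaluated at $x=s\sigma^2$.

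Define $\psi(\sigma^2)\triangleq 1-(1-e^{-\nu s\sigma^2})^{M_r}$. Its derivative is $-M_r\nu s\,e^{-\nu s\sigma^2}(1-e^{-\nu s\sigma^2})^{M_r-1}$. This is strictly negative for every $\sigma^2>0$, since $0<1-e^{-\nu s\sigma^2}<1$. Hence $\psi$ is strictly decreasing, and $\sigma_{\mathrm{RAQR}}^2<\sigma_n^2$ gives $p_{\mathrm{cov}}^{(\mathrm{RAQR})}(\theta|r)>p_{\mathrm{cov}}^{(\mathrm{conv})}(\theta|r)$.

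The same conclusion also holds for the exact $\Gamma(M_r,1)$ CCDF, because any CCDF evaluated at $s\sigma^2$ is nonincreasing in $\sigma^2$. Strictness follows from the positive density of the Gamma distribution. So the result does not depend on the K-S approximation. Integrating against $f_{R_0}$ in Theorem~\ref{thm:uncond_coverage} preserves the strict inequality at the network level.

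The main obstacle is justifying the reduction to Corollary~\ref{cor:noise_lim} rather than the algebra. That reduction needs $\rho_{\mathrm{NL}}(r)\ll 1$, so that $\kappa\approx c_1$ and $\sigma_d^2\approx0$. If one wants the inequality without this approximation, I would first try to show directly that $\sigma_{\mathrm{eff}}^2(r)<\sigma_n^2$ in the regime $\lambda_u\to0$. Using \eqref{eq:kappa}, \eqref{eq:sigma_d} and \eqref{eq:sigma_eff}, this amounts to a perturbation margin: it holds whenever $\sigma_n^2-\sigma_{\mathrm{RAQR}}^2$ dominates the $O(\rho_{\mathrm{NL}})$ correction. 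I would state that margin as an explicit condition on $r$.
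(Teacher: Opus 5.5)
Your proof is correct. For general $M_r$ it takes a genuinely different and sounder route than the paper.

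The paper gives no separate proof. The $M_r=1$ ratio is read off directly, exactly as you do. The general case is justified only by the remark that each exponential satisfies $e^{-k\nu s\sigma_{\mathrm{RAQR}}^2}>e^{-k\nu s\sigma_n^2}$. As you point out, this term-wise comparison is not a valid argument on its own. The coefficients $\binom{M_r}{k}(-1)^{k+1}$ are negative for even $k$, so enlarging those exponentials lowers the sum. Already for $M_r=2$ the sum is $2e^{-y}-e^{-2y}$, where the second term moves the wrong way.

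Your step fixes this. You re-collapse the sum to $1-(1-e^{-\nu s\sigma^2})^{M_r}$ and check that its derivative $-M_r\nu s\,e^{-\nu s\sigma^2}(1-e^{-\nu s\sigma^2})^{M_r-1}$ is strictly negative. That supplies the missing monotonicity, so the conclusion the paper states is true, but only your argument establishes it.

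Two further points are improvements over the paper:
\begin{itemize}
\item You make explicit the hypothesis $\sigma_{\mathrm{RAQR}}^2<\sigma_n^2$. The strict inequality needs it in both the $M_r=1$ and general cases. The paper leaves it implicit and only cites the typical value $\sigma_{\mathrm{RAQR}}^2/\sigma_n^2\approx 10^{-6}$.
\item You observe that the same ordering holds for the exact $\Gamma(M_r,1)$ CCDF. This shows the conclusion does not depend on the K-S approximation.
\end{itemize}
The paper's version gains only brevity; yours gives a correct proof that does not rely on the approximation.

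Your final paragraph, on bounding the $O(\rho_{\mathrm{NL}})$ correction, goes beyond what the proposition claims. The proposition is stated inside the noise-limited reduction of Corollary~\ref{cor:noise_lim}, so that paragraph is optional.
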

	
	Under typical parameters, $\sigma_{\mathrm{RAQR}}^2/\sigma_n^2 \approx 10^{-6}$, so the RAQR's quantum-limited noise floor provides a significant coverage advantage that grows exponentially with the SINR threshold $\theta$ and the path loss $r^{\alpha}$.

	
	\subsection{Extension to Coupling-Induced Receive Correlation}\label{subsec:coupling}
	
	In conventional antenna arrays, mutual coupling and the induced spatial correlation become more pronounced when antenna elements are densely packed. To capture this effect while maintaining analytical tractability, we model the receive-side coupling by a positive-definite matrix $\mathbf{R} \in \mathbb{C}^{M_r \times M_r}$ and adopt the exponential form
	\begin{equation}\label{eq:R_exp}
		[\mathbf{R}]_{ij} = \varrho^{|i-j|}, \qquad \varrho \in [0,1),
	\end{equation}
	which is widely used as a tractable model for spatially correlated multi-antenna channels~\cite{sanguinetti2020toward}. The effective channel is then written as
	\begin{equation}\label{eq:h_coupled}
		\tilde{\mathbf{h}}_{0k} = \sqrt{\beta_{0k}}\,\mathbf{R}^{1/2}\,\mathbf{g}_{0k}, \quad \mathbf{g}_{0k} \sim \mathcal{CN}(\mathbf{0},\mathbf{I}_{M_r}).
	\end{equation}
	
	Under MRC, the desired-channel gain becomes
	\begin{equation}\label{eq:H_tilde}
		\tilde{H}_0 \triangleq \mathbf{g}_{00}^H\,\mathbf{R}\,\mathbf{g}_{00} = \sum_{k=1}^{M_r} \lambda_k\,|g'_k|^2,
	\end{equation}
	where $\{\lambda_k\}$ are the eigenvalues of $\mathbf{R}$ and $|g'_k|^2 \sim \exp(1)$ are i.i.d. This eigenvalue-domain representation is standard for correlated MRC channels~\cite{mckay2007mimomrc}. Stochastic-geometry studies have likewise shown that correlation-aware modeling is essential when MRC operates in PPP interference fields, rather than under branch-wise independence~\cite{tanbourgi2014effect}. By contrast, Rydberg atom-based receiver arrays can be idealized as coupling-free in the classical antenna sense, and are therefore modeled by the benchmark case $\mathbf{R} = \mathbf{I}$~\cite{yuan2025mimo}.
	
	Following the same MRC derivation as in Theorem~\ref{thm:SINR}, the post-combining SINR under correlation is
	\begin{equation}\label{eq:SINR_coup}
		\mathrm{SINR}_0^{(\mathrm{coup})} = \frac{P_0\,\beta_{00}\,\tilde{H}_0}{\displaystyle\sum_{\mathbf{x}_i \in \Phi_u^{(\mathrm{int})}} P_u\,\beta_{0i}\,\tilde{G}_i + \sigma_n^2},
	\end{equation}
	where $\tilde{H}_0$ is given by \eqref{eq:H_tilde}. When $\mathbf{R} = \mathbf{I}$, all $\lambda_k = 1$ and $\tilde{H}_0 \sim \Gamma(M_r,1)$, recovering the uncoupled case. For $\varrho > 0$, the unequal eigenvalues increase the variance of $\tilde{H}_0$ from $M_r$ to $\mathrm{tr}(\mathbf{R}^2) > M_r$ while preserving the mean $\mathbb{E}[\tilde{H}_0] = M_r$; the resulting dispersion in the effective channel gain worsens the coverage performance and reduces the effective array gain under strong correlation.
	
	Conditioned on the MRC direction $\tilde{\mathbf{t}}_0$, the interference projection $\tilde{G}_i = |\tilde{\mathbf{t}}_0^H \mathbf{R}^{1/2}\mathbf{g}_{0i}|^2$ follows $\exp(\mu_R)$. Here $\tilde{\mathbf{t}}_0 = \mathbf{R}^{1/2}\mathbf{g}_{00}/\|\mathbf{R}^{1/2}\mathbf{g}_{00}\|$ and $\mu_R = \tilde{\mathbf{t}}_0^H\mathbf{R}\,\tilde{\mathbf{t}}_0$. This random scale is coupled to $\tilde{H}_0$ through $\mathbf{g}_{00}$. To keep the analysis tractable, we replace $\mu_R$ by its deterministic surrogate
	\begin{equation}\label{eq:mu_I}
		\mu_I \triangleq \frac{\mathrm{tr}(\mathbf{R}^2)}{\mathrm{tr}(\mathbf{R})} = \frac{\sum_{k=1}^{M_r}\lambda_k^2}{M_r},
	\end{equation}	
	so that $\tilde{G}_i \approx \exp(\mu_I)$. Note that the desired-signal channel gain $\tilde{H}_0$ is computed exactly from \eqref{eq:H_tilde}. Here, replacing $\mu_R$ with $\mu_I$ only affects the interference term and preserves the mean of the interference projection, since $\mathbb{E}[\mu_R] = \mu_I$.
	
	Under this approximation, the interference Laplace transform becomes
	\begin{equation}\label{eq:laplace_coup}
		\mathcal{L}_{I|R_0}^{(\mathrm{coup})}(s|r) \approx \exp\!\left(-\pi\lambda_u\sqrt{\zeta_c}\left(\frac{\pi}{2} - \arctan\!\frac{r^2}{\sqrt{\zeta_c}}\right)\right),
	\end{equation}
	where $\zeta_c \triangleq sP_uC\mu_I$, which reduces to the uncoupled Laplace \eqref{eq:laplace_closed} when $\mu_I = 1$.
	
	When the eigenvalues of $\mathbf{R}$ are distinct (which holds for $\varrho > 0$), the CCDF of $\tilde{H}_0$ admits the partial-fraction expansion
	\begin{equation}\label{eq:CCDF_Htilde}
		\mathbb{P}[\tilde{H}_0 > x] = \sum_{k=1}^{M_r} \omega_k\,e^{-x/\lambda_k}, \quad \omega_k = \prod_{j \neq k}\frac{\lambda_k}{\lambda_k - \lambda_j}.
	\end{equation}
	
	\begin{proposition}[Coverage under Coupling-Induced Correlation]\label{prop:pcov_coup}
		Under the approximation $\tilde{G}_i \approx \exp(\mu_I)$, the conditional coverage probability with receive correlation $\mathbf{R}$ is
		\begin{multline}\label{eq:pcov_coup}
			p_{\mathrm{cov}}^{(\mathrm{coup})}(\theta|r) \approx \sum_{k=1}^{M_r} \omega_k\,\exp\!\left(-\frac{s(r,\theta)\,\sigma_n^2}{\lambda_k}\right) \\
			\times \mathcal{L}_{I|R_0}^{(\mathrm{coup})}\!\left(\frac{s(r,\theta)}{\lambda_k}\,\bigg|\,r\right),
		\end{multline}
		where $\omega_k$ and $\lambda_k$ are defined in \eqref{eq:CCDF_Htilde}, $\mathcal{L}_{I|R_0}^{(\mathrm{coup})}$ is given by \eqref{eq:laplace_coup}, and $s(r,\theta) = \theta r^{\alpha}/(P_0 C)$.
	\end{proposition}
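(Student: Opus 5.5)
The argument follows the proof of Theorem~\ref{thm:cond_coverage}. The only changes are that the Gamma-CCDF approximation is replaced by the exact hypoexponential CCDF \eqref{eq:CCDF_Htilde}, and that the interference fading is rescaled.

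First, rewrite the coverage event from \eqref{eq:SINR_coup}. Conditioned on $R_0=r$, the event $\{\mathrm{SINR}_0^{(\mathrm{coup})}>\theta\}$ is equivalent to
\begin{equation*}
\tilde{H}_0 > s(r,\theta)\bigl(\tilde{I}(r)+\sigma_n^2\bigr), \qquad \tilde{I}(r)\triangleq\sum_{\mathbf{x}_i\in\Phi_u^{(\mathrm{int})}}P_u\beta_{0i}\tilde{G}_i ,
\end{equation*}
with $s(r,\theta)$ as in \eqref{eq:s_def}.

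Next, we need $\tilde{H}_0$ and $\tilde{I}(r)$ to be conditionally independent given $r$. Strictly, they are not: $\tilde{G}_i$ has scale $\mu_R$, and $\mu_R$ depends on $\mathbf{g}_{00}$. Under the stated surrogate, however, the $\tilde{G}_i$ are i.i.d.\ $\exp(\mu_I)$ with the deterministic mean $\mu_I$. They then depend on $\mathbf{g}_{00}$ only through a unit direction that is irrelevant to their law, since $\mathbf{g}_{0i}$ is independent of $\mathbf{g}_{00}$. This gives the decoupling.

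Then condition on $\tilde{I}(r)$ and apply \eqref{eq:CCDF_Htilde} with $x=s(\tilde{I}+\sigma_n^2)$. This gives
\begin{equation*}
\sum_k\omega_k e^{-s\sigma_n^2/\lambda_k}e^{-(s/\lambda_k)\tilde{I}} .
\end{equation*}
Since the sum is finite, we can take $\mathbb{E}_{\tilde{I}}$ term by term. Each term yields $\mathcal{L}_{\tilde{I}|R_0}(s/\lambda_k\,|\,r)$, which is exactly \eqref{eq:pcov_coup}.

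It remains to show that this Laplace transform equals \eqref{eq:laplace_coup}. Repeat the PGFL computation of Lemma~\ref{lem:laplace}. With $\tilde{G}\sim\exp(\mu_I)$ meaning mean $\mu_I$, we have $\mathbb{E}[e^{-sP_uCt^{-\alpha}\tilde{G}}]=1/(1+sP_uC\mu_I t^{-\alpha})$. The integrand of \eqref{eq:laplace_general} is therefore unchanged except that $\zeta$ is replaced by $\zeta_c=sP_uC\mu_I$. For $\alpha=4$, the substitution $v=t^2$ gives $\tfrac{1}{2}\int_{r^2}^\infty \zeta_c/(v^2+\zeta_c)\,dv=\tfrac{\sqrt{\zeta_c}}{2}\bigl(\tfrac{\pi}{2}-\arctan(r^2/\sqrt{\zeta_c})\bigr)$, which reproduces \eqref{eq:laplace_coup}.

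The hardest part is justifying \eqref{eq:CCDF_Htilde} and the independence step. For the CCDF, I would compute the moment generating function $\prod_k(1-\lambda_k t)^{-1}$ and invert it by partial fractions. This requires the eigenvalues to be distinct, which holds for the Kac--Murdock--Szeg\H{o} matrix with $\varrho\in(0,1)$ because it is the inverse of a tridiagonal Jacobi-type matrix. The partial-fraction inversion gives $\omega_k=\prod_{j\ne k}\lambda_k/(\lambda_k-\lambda_j)$. For the independence step, I would state explicitly that the ``$\approx$'' in \eqref{eq:pcov_coup} comes solely from replacing $\mu_R$ by $\mu_I$. All other steps are exact.
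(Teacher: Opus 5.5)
Your proposal is correct and follows the paper's proof: rewrite the coverage event as $\tilde{H}_0 > s(r,\theta)(\tilde{I}+\sigma_n^2)$, substitute the partial-fraction CCDF \eqref{eq:CCDF_Htilde}, take the expectation over the interference term by term, and identify each factor as $\mathcal{L}_{I|R_0}^{(\mathrm{coup})}(s/\lambda_k\,|\,r)$. The extra pieces you supply are all sound and make explicit what the paper only asserts in the surrounding text: the independence of $\tilde{H}_0$ and $\tilde{I}$ under the $\exp(\mu_I)$ surrogate, the MGF derivation of $\omega_k$, the distinctness of the eigenvalues of $\mathbf{R}$ for $\varrho\in(0,1)$, and the recomputation of \eqref{eq:laplace_coup}.
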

	\begin{proof}
		See Appendix~\ref{app:coupling_proof}.
	\end{proof}
	
	When $\varrho = 0$, we have $\lambda_k = 1$ and $\mu_I = 1$ for all $k$, so $\tilde{H}_0 \sim \Gamma(M_r,1)$ and $\tilde{G}_i \sim \exp(1)$, recovering the exact uncoupled coverage probability. Corollary~\ref{cor:pcov_conv} provides a K-S approximation of this same quantity. Since the RAQR's coupling-free operation ($\mathbf{R} = \mathbf{I}$) always achieves $\varrho = 0$, it avoids the correlation-induced channel-gain dispersion entirely, thereby preserving the full array gain associated with $M_r$ independent diversity branches.

	\section{Simulation Results}\label{sec:simulation}
	
	The default system parameters are listed in Table~\ref{tab:params}. All Monte Carlo results are averaged over $N_{\mathrm{mc}} = 30{,}000$ independent trials. Unless stated otherwise, we use $^{133}$Cs as the default atomic species with the optimized detunings given in the table.
	
	\begin{table}[t]
		\centering
		\caption{Default Simulation Parameters}
		\label{tab:params}
		\renewcommand{\arraystretch}{1.15}
		\begin{tabular}{l l l}
			\toprule
			\textbf{Parameter} & \textbf{Symbol} & \textbf{Value} \\
			\midrule
			\multicolumn{3}{l}{\textit{Network}} \\
			Carrier frequency         & $f_c$            & $10$\,GHz \\
			Path-loss exponent        & $\alpha$         & $4$ \\
			Desired / interferer power & $P_0,\,P_u$    & $1$\,W \\
			Minimum serving distance  & $d_0$            & $10$\,m \\
			Number of RAQR elements   & $M_r$            & $10$ \\
			Signal bandwidth          & $B$              & $1$\,MHz \\
			Photodetection noise      & $\sigma_w^2$     & $2\times10^{-9}$\,V$^2$ \\
			System noise temperature  & $T_{\mathrm{sys}}$ & $290$\,K \\
			Receiver noise figure     & $F$              & $5$\,dB \\
			\midrule
			\multicolumn{3}{l}{\textit{Atomic ($^{133}$Cs, $42D_{5/2}\!\to\!43P_{3/2}$)}} \\
			Probe wavelength          & $\lambda_p$      & $852$\,nm \\
			Decay rate of $|2\rangle$ & $\gamma_2$       & $2\pi \times 5.2$\,MHz \\
			Probe detuning (opt.)     & $\Delta_p$       & $2\pi \times (-5.97)$\,MHz \\
			Coupling detuning (opt.)  & $\Delta_c$       & $2\pi \times 4.97$\,MHz \\
			LO detuning (opt.)        & $\Delta_\ell$    & $2\pi \times 0.15$\,MHz \\
			Linear gain               & $|c_1|$          & $0.327$ \\
			Nonlinearity ratio        & $|c_3/c_1|$      & $281$ \\
			\midrule
			\multicolumn{3}{l}{\textit{Simulation}} \\
			Monte Carlo trials        & $N_{\mathrm{mc}}$ & $30{,}000$ \\
			SINR threshold range      & $\theta$         & $-10$ to $20$\,dB \\
			\bottomrule
		\end{tabular}
	\end{table}

	\subsection{Detuning Optimization and Parameter Landscape}\label{sec:sim_a}
	
	\begin{figure*}[h]
		\centering
		\includegraphics[width=1\textwidth]{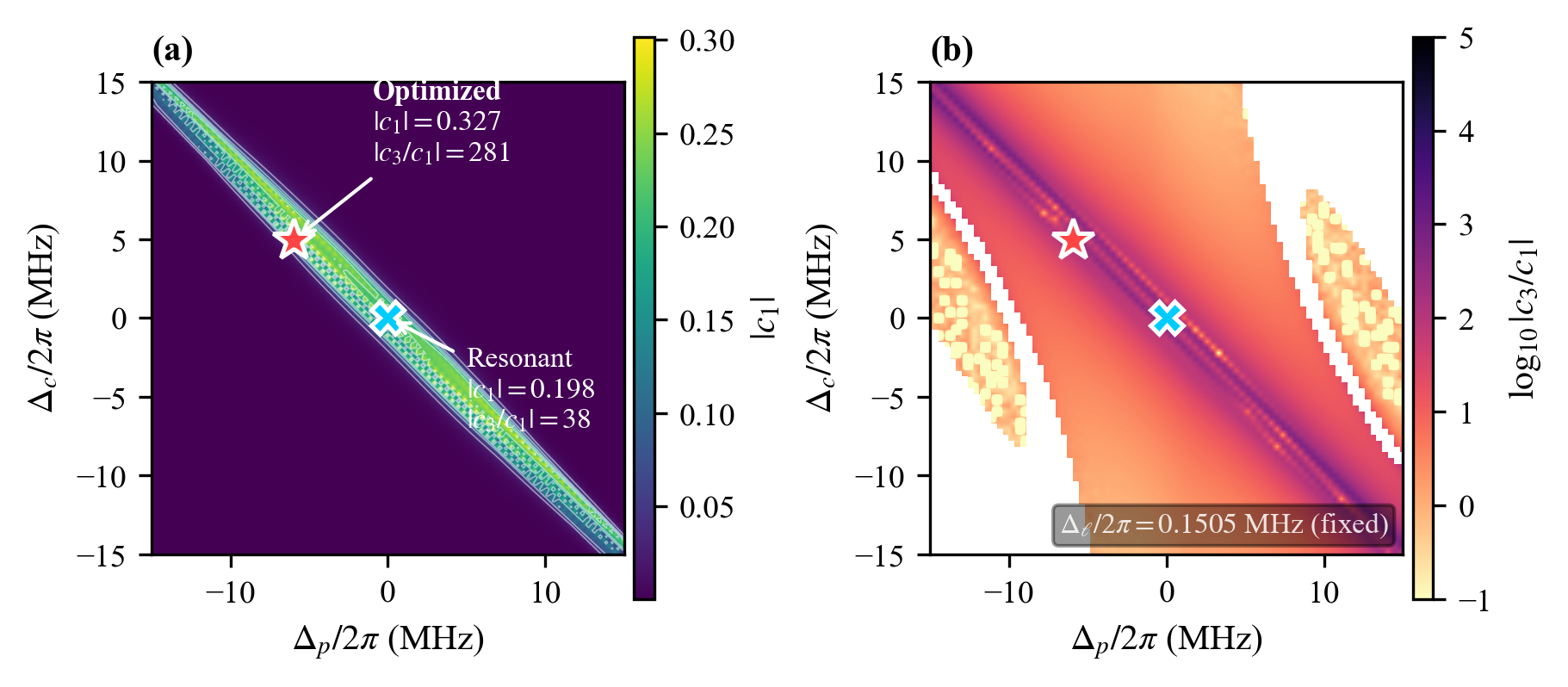}
		\caption{Detuning landscape for the default $^{133}$Cs configuration. Here $\Delta_\ell$ is fixed at its optimized value, and the remaining parameters follow Table~\ref{tab:params}. (a)~Linear transduction gain $|c_1|$. (b)~Nonlinearity ratio $|c_3/c_1|$ on a logarithmic scale. The star marks the optimized point, and the cross marks $(\Delta_p,\Delta_c)=(0,0)$ with the same $\Delta_\ell$.}
		\label{fig:fig_sim_a}
	\end{figure*}
	
	With the atomic species and Rydberg state pair fixed, the remaining design parameters are the laser detunings $(\Delta_p, \Delta_c, \Delta_\ell)$. They determine $c_1$ and $c_3$ through the signal chain of Section~\ref{sec:signal_model}.
	We follow~\cite{wu2023ldr} and maximize the small-signal gain $|c_1|$:
	\begin{equation}\label{eq:detuning_opt}
		(\Delta_p^*, \Delta_c^*, \Delta_\ell^*)
		= \arg\max_{(\Delta_p, \Delta_c, \Delta_\ell)} |c_1(\Delta_p, \Delta_c, \Delta_\ell)|.
	\end{equation}
	We solve~\eqref{eq:detuning_opt} by multi-start local search, summarized in Algorithm~\ref{alg:detuning_opt}.
	
	Fig.~\ref{fig:fig_sim_a} shows $|c_1|$ and $|c_3/c_1|$ over the $(\Delta_p, \Delta_c)$ plane for $^{133}$Cs. The high-$|c_1|$ band lies along $\Delta_p + \Delta_c \approx 0$, where the two-photon resonance enhances the probe coherence. The same band also has large $|c_3/c_1|$. The optimized point (star) reaches $|c_1| = 0.327$ but $|c_3/c_1| = 281$, while the resonant point $\Delta_p = \Delta_c = 0$ has a lower $|c_1| = 0.198$ and a much smaller $|c_3/c_1| = 38$. The two ratios are coupled because $\rho_{21}$ is a rational function of $\Omega_{\mathrm{RF}}$. Operating points with a steeper linear slope also enter gain compression sooner, so $|c_3|$ grows faster than $|c_1|$. At high interference, $|c_3/c_1|$ controls the Bussgang distortion, so the $|c_1|$-optimal detunings are not the best choice for network coverage.

	\begin{algorithm}[t]
		\caption{Multi-Start Detuning Optimization}
		\label{alg:detuning_opt}
		\begin{algorithmic}[1]
			\Require Atomic species and Rydberg state pair; search bounds $[\Delta_{\min},\Delta_{\max}]$; number of random starts $N_s$
			\Ensure Optimized detunings $(\Delta_p^*,\Delta_c^*,\Delta_\ell^*)$; coefficients $c_1^*$, $c_3^*$
			\For{$n = 1,\ldots,N_s$}
			\State Sample $(\Delta_p^{(0)},\Delta_c^{(0)},\Delta_\ell^{(0)})$ from $[\Delta_{\min},\Delta_{\max}]^3$
			\State Run Nelder--Mead to maximize $|c_1(\Delta_p,\Delta_c,\Delta_\ell)|$ using the signal model in Section~\ref{sec:signal_model}
			\State Record local optimum $(\Delta_p^{(n)},\Delta_c^{(n)},\Delta_\ell^{(n)})$ and $|c_1^{(n)}|$
			\EndFor
			\State $(\Delta_p^*,\Delta_c^*,\Delta_\ell^*) \gets \arg\max_n |c_1^{(n)}|$
			\State Compute $c_1^*$ and $c_3^*$ at the optimized point via \eqref{eq:c1c3}
		\end{algorithmic}
	\end{algorithm}
	
\subsection{Analytical Validation and Density Tradeoff}
\label{sec:sim_b}

\begin{figure}[h]
	\centering
	\includegraphics[width=\linewidth]{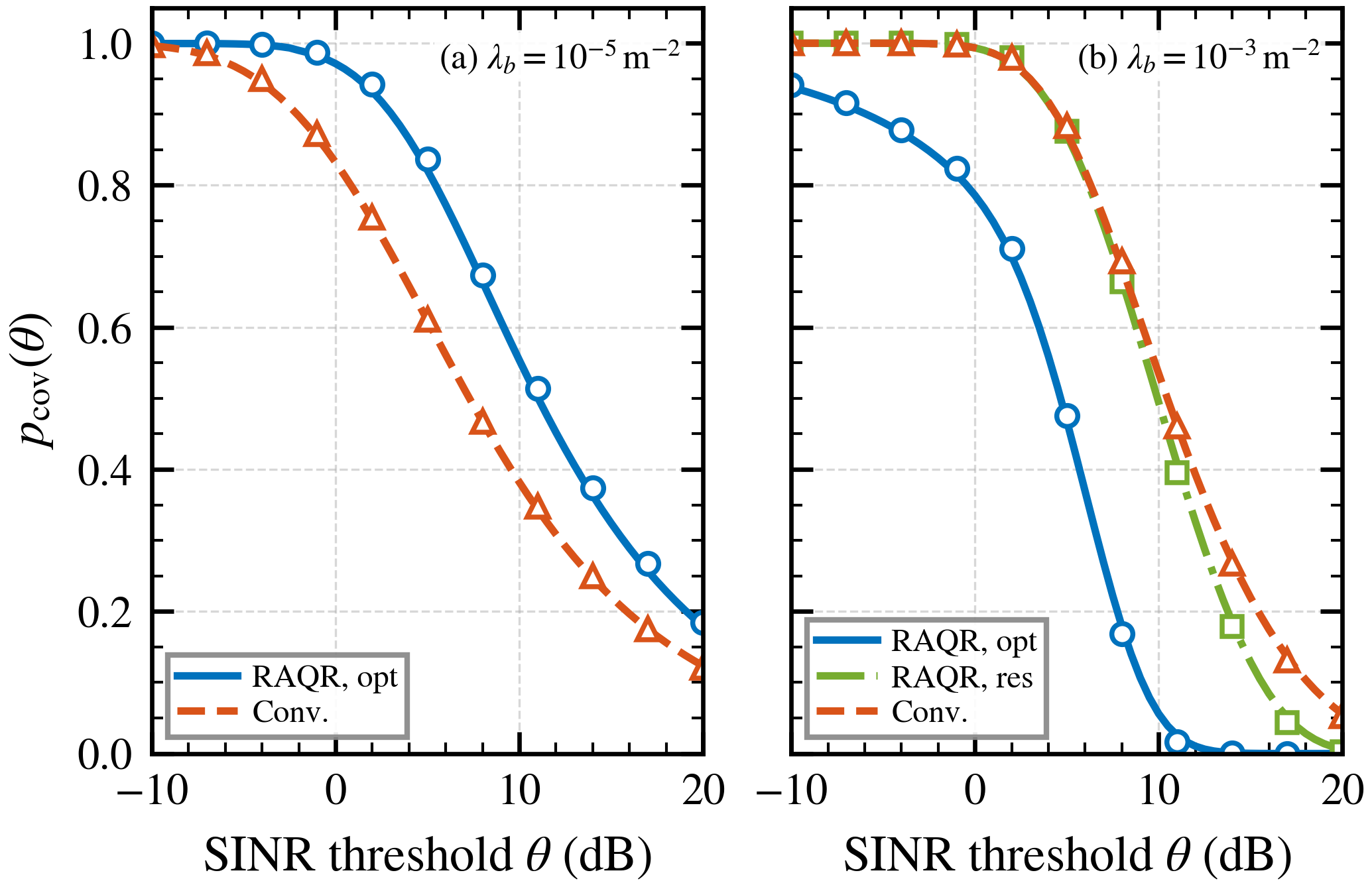}\\[-1mm]
	\includegraphics[width=\linewidth]{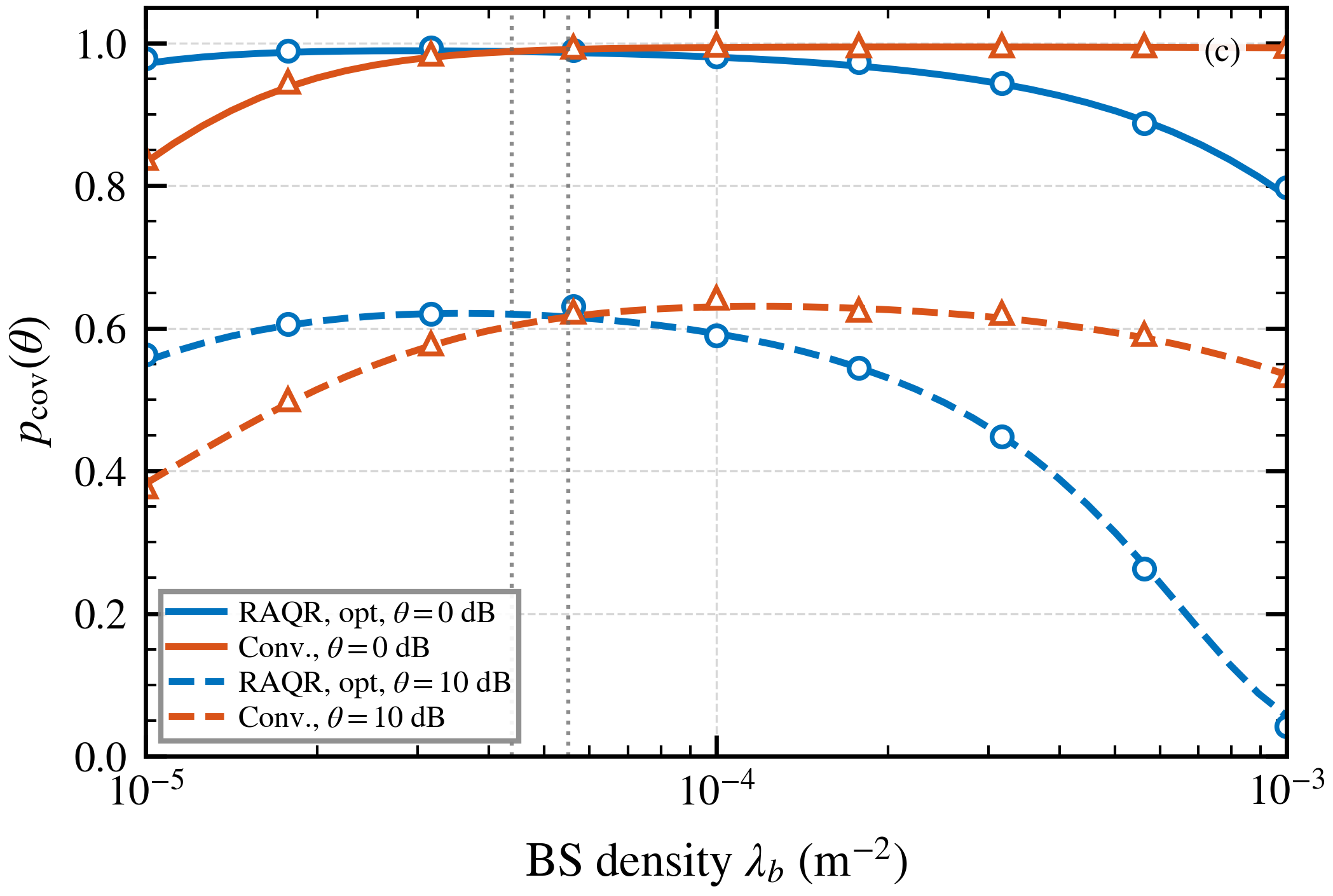}
	\caption{Coverage probability under the BS-density tradeoff with
		$M_r=10$, $\lambda_u=\lambda_b$, and the remaining parameters taken
		from Table~\ref{tab:params}. Lines denote analytical results, and
		open markers denote Monte Carlo simulations. (a)~Coverage versus
		SINR threshold $\theta$ at $\lambda_b=10^{-5}\;\mathrm{m}^{-2}$.
		(b)~Coverage versus $\theta$ at
		$\lambda_b=10^{-3}\;\mathrm{m}^{-2}$, including the resonant RAQR
		with all detunings set to zero. (c)~Coverage versus BS density
		$\lambda_b$ for $\theta=0$ and $10$~dB.}
	\label{fig:fig_sim_b}
\end{figure}

Fig.~\ref{fig:fig_sim_b} compares the coverage probability of the
optimized Cs RAQR and the conventional receiver in both the
threshold and density domains.  The analytical curves from
Theorems~\ref{thm:cond_coverage} and~\ref{thm:uncond_coverage}
follow the Monte~Carlo results in all panels. The largest absolute
difference between analysis and Monte~Carlo over the plotted curves
is about $0.015$, or $1.5$ percentage points.

At $\lambda_b=10^{-5}\;\mathrm{m}^{-2}$, the RAQR has higher coverage
over the plotted $\theta$ range. In this sparse regime, aggregate
interference is weak, and the low RAQR noise floor
$\sigma_{\mathrm{RAQR}}^{2} \ll \sigma_n^{2}$ dominates the noise budget.
As $\lambda_b$ increases, stronger aggregate input raises
$\bar{\sigma}_{\mathrm{in}}^{2}$, and the Bussgang distortion from
the cubic term grows substantially. At
$\lambda_b=10^{-3}\;\mathrm{m}^{-2}$, this distortion dominates
$\sigma_{\mathrm{eff}}^{2}$, and the optimized RAQR falls below the
conventional receiver. The resonant RAQR in
Fig.~\ref{fig:fig_sim_b}(b), with all detunings set to zero, has a
lower linear gain ($|c_1|=0.194$) but a much smaller nonlinearity
ratio ($|c_3/c_1|=38$), so it outperforms the optimized RAQR at this density.

Fig.~\ref{fig:fig_sim_b}(c) also plots the coverage as a function of
$\lambda_b$, with $\lambda_u=\lambda_b$. For $\theta=0$ and $10$~dB,
the optimized RAQR attains higher coverage than the conventional receiver at
low BS densities, while the advantage disappears as the network becomes
denser. The two crossover points occur at
$\lambda_b\approx4.4\times10^{-5}\,\mathrm{m}^{-2}$ and
$\lambda_b\approx5.5\times10^{-5}\,\mathrm{m}^{-2}$, respectively.
Beyond these points, the RAQR coverage falls steeply with $\lambda_b$,
because Bussgang distortion scales as $\bar{\sigma}_{\mathrm{in}}^{6}$,
i.e., cubically in $\lambda_b$, while the conventional noise grows only linearly.
These crossovers indicate that the operating point should balance
small-signal sensitivity with receiver linearity.

	\subsection{Comparison Between Cs and Rb}\label{sec:sim_c}
	
	The coefficients $c_1$ and $c_3$ depend on the atomic species and the
	selected Rydberg transition. To check how this choice changes the
	coverage tradeoff, we compare optimized $^{133}$Cs with optimized
	$^{85}$Rb ($59D_{5/2}\!\to\!60P_{3/2}$, $\lambda_p = 780$\,nm).
	After detuning optimization, Rb gives $|c_1| = 0.192$ and
	$|c_3/c_1| = 81$. The corresponding Cs values are
	$|c_1| = 0.327$ and $|c_3/c_1| = 281$. Thus Cs has a
	$1.7\times$ larger linear gain, but its nonlinearity ratio is also
	about $3.5\times$ larger. The optimized Rb gain is close to that of
	the resonant Cs point in Section~\ref{sec:sim_b}, although the two
	cases still differ in $|c_3/c_1|$ ($81$ for Rb and $38$ for resonant
	Cs).
	
	Fig.~\ref{fig:fig_sim_c}(a) shows the normalized $V^{(B)}$ curves
	and their first order and third order Taylor approximations. The Cs
	curve has a steeper slope near the operating point, which is
	consistent with its larger $|c_1|$. 
	It also leaves the linear
	approximation sooner, indicating a stronger nonlinear response. The
	Rb curve is shallower, but it stays close to its linear approximation
	over a wider input range. 
	The two shaded bands show the typical range of
	$\Omega_{\mathrm{RF}}/\Omega_l$ caused by aggregate interference around
	the operating point. The inter site distance (ISD) denotes the typical
	spacing between neighboring BSs and is approximated as
	$\mathrm{ISD}\approx1/\sqrt{\lambda_b}$.
	At $\lambda_b = 10^{-5}\;\mathrm{m}^{-2}$, corresponding to
	$\mathrm{ISD}\approx316$ m, this range is narrow and remains close to the
	operating point, so both species are effectively linear. At
	$\lambda_b = 10^{-3}\;\mathrm{m}^{-2}$, corresponding to
	$\mathrm{ISD}\approx32$ m, this range is much wider.

	At $\lambda_b = 10^{-5}\;\mathrm{m}^{-2}$
	(Fig.~\ref{fig:fig_sim_c}(b)), Bussgang distortion is small, and the
	linear gain mainly determines the coverage. Cs is above Rb over the
	plotted $\theta$ range. At $\theta = 5$\,dB, the coverage
	probabilities are $0.82$ for Cs, $0.73$ for Rb, and $0.61$ for the
	conventional receiver. Both RAQR species are above the conventional
	receiver in this sparse regime. At
	$\lambda_b = 10^{-3}\;\mathrm{m}^{-2}$
	(Fig.~\ref{fig:fig_sim_c}(c)), the ranking changes. The larger
	$|c_3/c_1|$ of Cs makes its coverage degrade first. At
	$\theta = 5$\,dB, the Cs coverage is $0.47$, while Rb reaches $0.73$
	and the conventional receiver reaches $0.87$. Rb loses roughly
	$40\%$ of the Cs linear gain, but its smaller nonlinearity ratio gives
	higher coverage in the dense regime. It still remains below the
	linear conventional receiver.
	
	The species comparison therefore changes the same two quantities as
	the detuning choice, namely $|c_1|$ and $|c_3/c_1|$.
	Changing the atomic species is not only a way to increase the receiver
	gain. It also changes the curvature of the atomic response around the
	operating point. A species or transition with a lower $|c_1|$ can still
	give better network coverage if it keeps the cubic term small over the
	input range induced by interference.
	
	\begin{figure}[t]
		\centering
		\includegraphics[width=\linewidth]{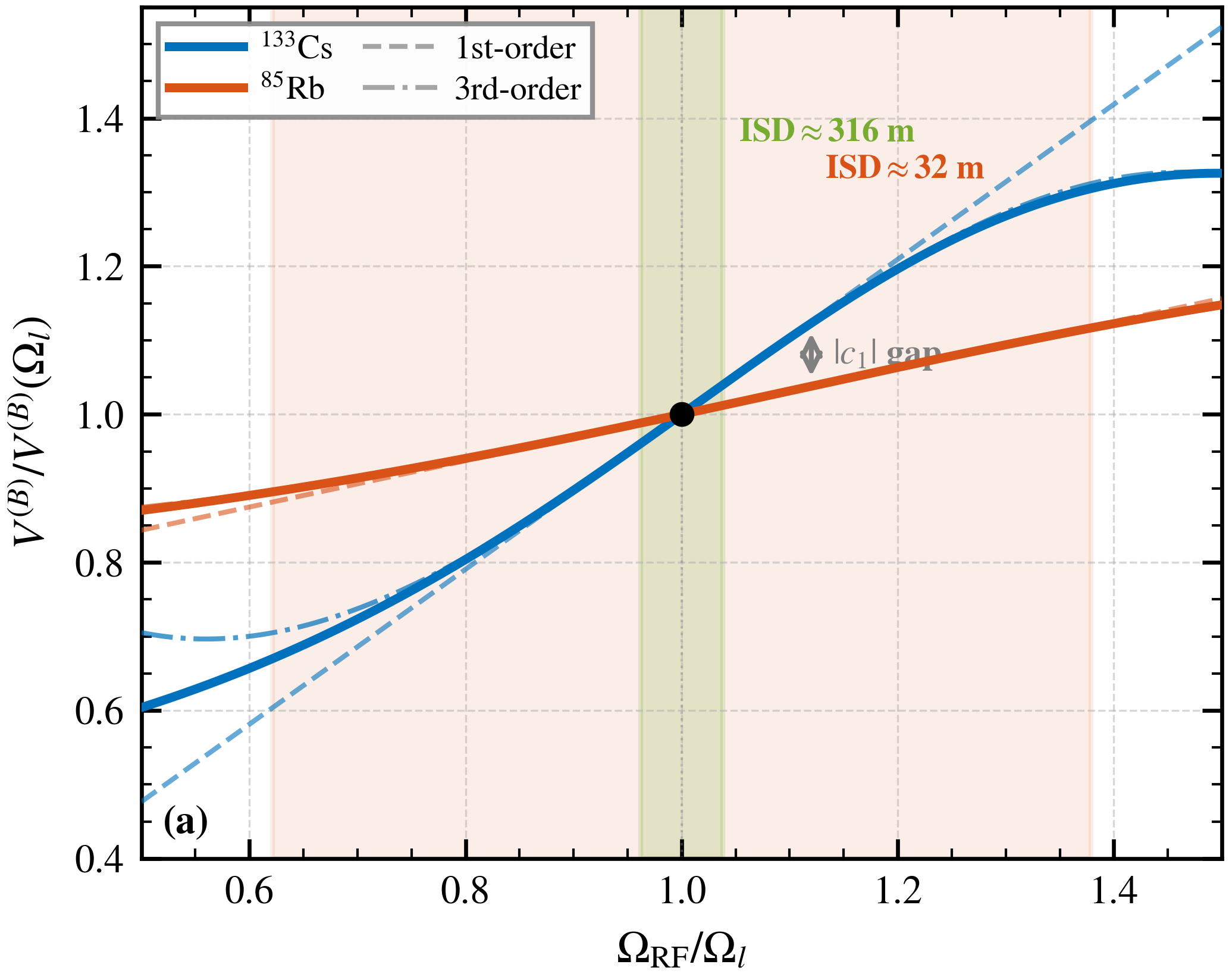}\\[-1mm]
		\includegraphics[width=\linewidth]{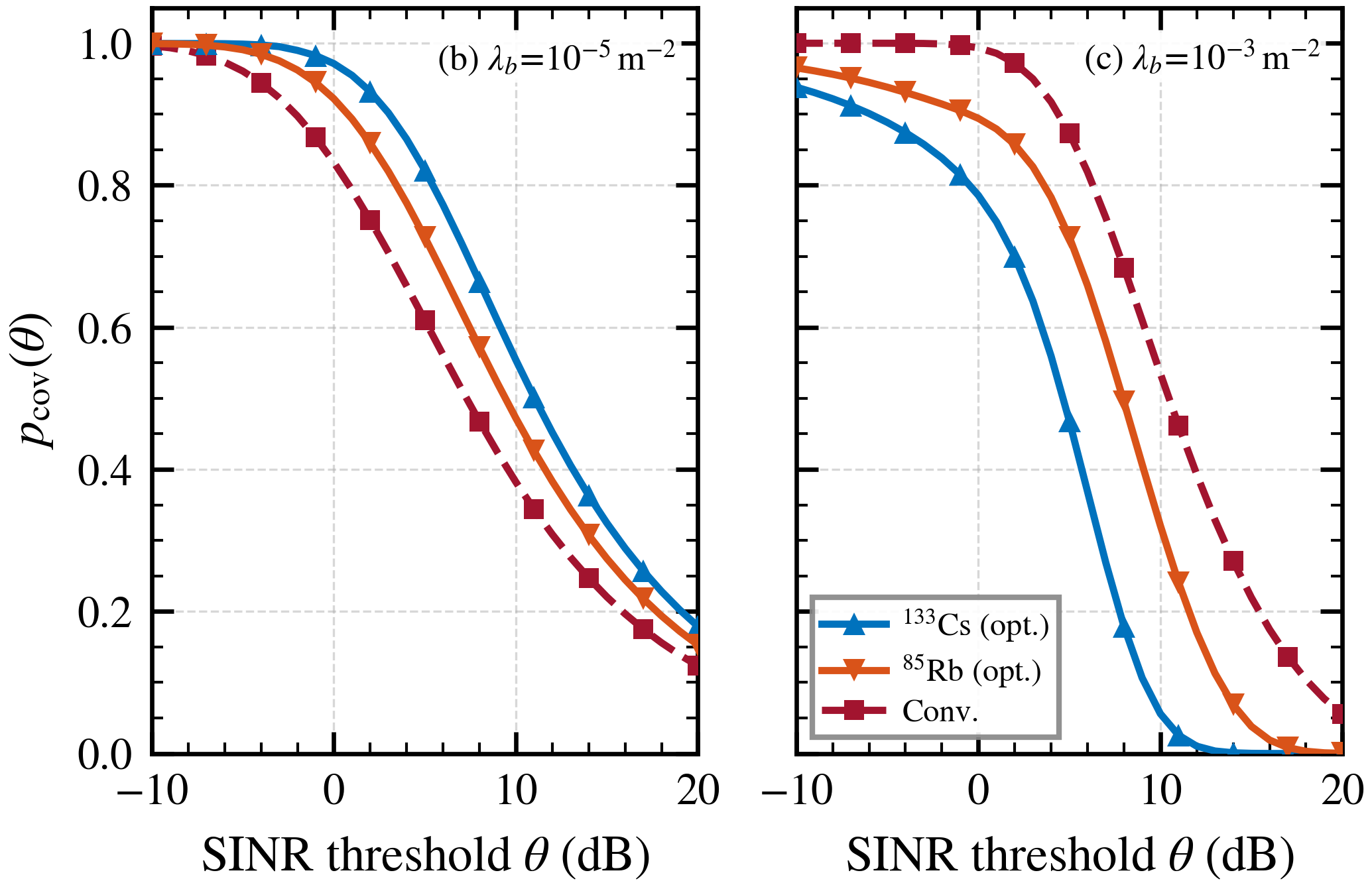}
		\caption{Comparison between $^{133}$Cs and $^{85}$Rb under the default system settings in Table~\ref{tab:params}. (a)~Normalized $V^{(B)}$ curves with first order and third order approximations. 
		The shaded bands indicate the typical range of
		$\Omega_{\mathrm{RF}}/\Omega_l$ caused by aggregate interference for
		$\lambda_b = 10^{-5}$ and $10^{-3}\;\mathrm{m}^{-2}$, corresponding to
		ISD $\approx316$ m and $32$ m, respectively.
		(b)~Coverage probability versus $\theta$ at $\lambda_b = 10^{-5}\;\mathrm{m}^{-2}$. (c)~Coverage probability versus $\theta$ at $\lambda_b = 10^{-3}\;\mathrm{m}^{-2}$.}
		\label{fig:fig_sim_c}
	\end{figure}
	
\subsection{Array Scaling and Nonlinearity Cost}\label{sec:sim_d}

Fig.~\ref{fig:fig_sim_d} studies whether increasing the number of receive elements can offset the nonlinear loss of the optimized RAQR. We fix a moderate BS density, $\lambda_b = 2\times10^{-4}\;\mathrm{m}^{-2}$, and sweep $M_r$ for several SINR thresholds $\theta$.

For both receivers, the coverage probability increases monotonically with $M_r$. The gap between the RAQR and the conventional receiver also narrows as $M_r$ grows, especially for lower SINR thresholds. 
Thus increasing the array size can offset part of the nonlinear loss of the RAQR. 
However, for $M_r$ from $1$ to $32$, the optimized RAQR remains below the conventional receiver. At the lowest threshold, $\theta=0$~dB, the two curves become nearly indistinguishable around $M_r \approx 24$.

The amount of compensation depends strongly on $\theta$. At $\theta = 0$~dB, the gap shrinks to zero by $M_r \approx 24$. By contrast, at $\theta = 10$~dB, the conventional receiver still leads by about $5.3\%$ even at $M_r = 32$. The number of elements needed to compensate the nonlinear loss therefore grows quickly as the target SINR increases.

This is the same $|c_3/c_1|$ effect seen in Sections~\ref{sec:sim_b} and~\ref{sec:sim_c}. Increasing $M_r$ enhances coherent combining for both receivers, but in the RAQR case part of that gain is absorbed by the distortion term rather than as coverage gain. Adding more elements help resolve this, but does not change the operating point. For high thresholds, the receiver still needs a more linear operating point or a different atomic configuration with a smaller $|c_3/c_1|$.

\begin{figure}[t]
	\centering
	\includegraphics[width=\linewidth]{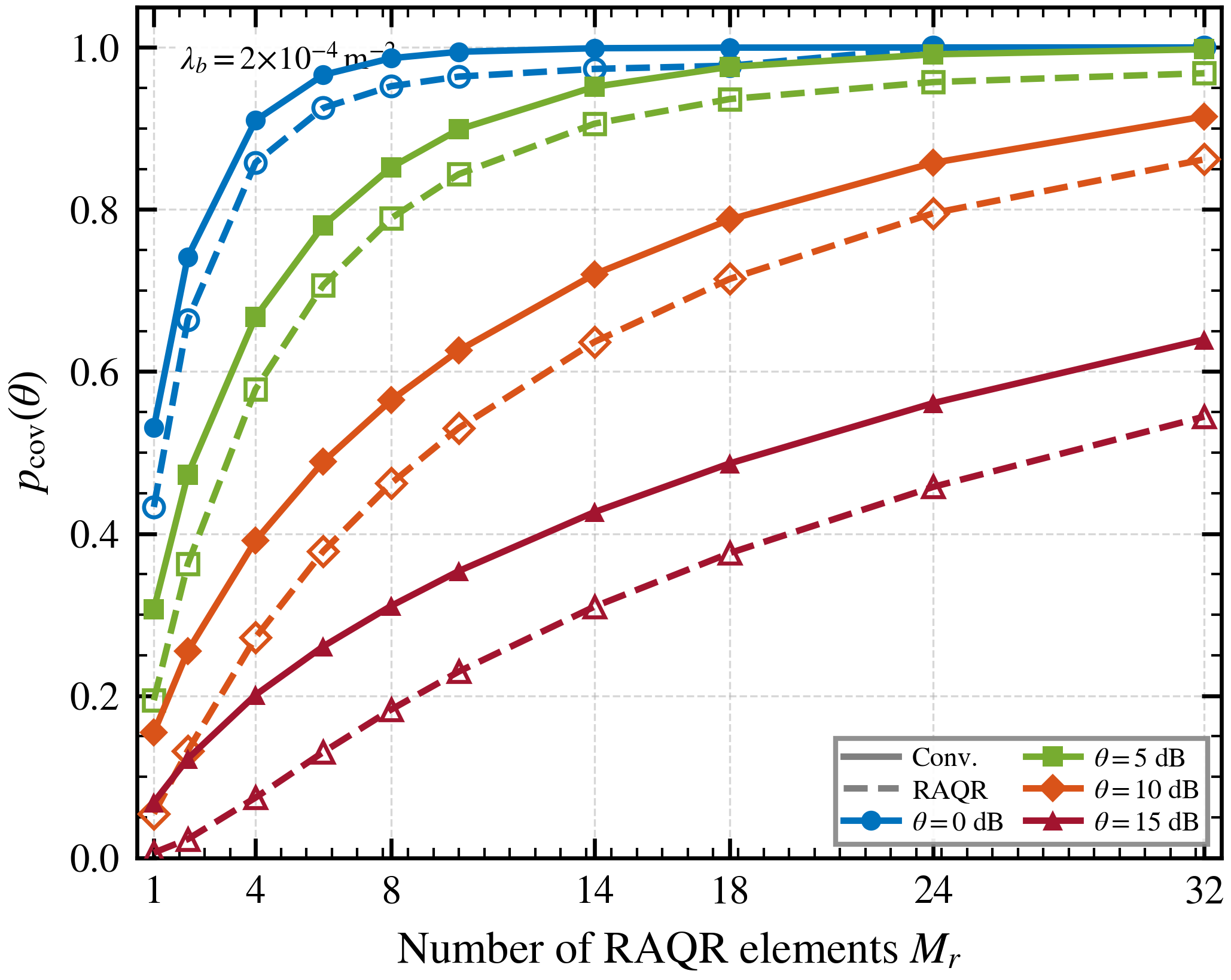}
	\caption{Coverage probability versus the number of RAQR array elements $M_r$ at $\lambda_b = 2\times10^{-4}\;\mathrm{m}^{-2}$ for four SINR thresholds, with the remaining parameters taken from Table~\ref{tab:params}. Solid lines denote the conventional receiver, and dashed lines denote the optimized RAQR.}
	\label{fig:fig_sim_d}
\end{figure}

	\subsection{Receive Correlation and Array Scaling}\label{sec:sim_e}
	
	Fig.~\ref{fig:fig_sim_e} evaluates the coupling model of Section~\ref{subsec:coupling}. In Fig.~\ref{fig:fig_sim_e}(a), we fix $M_r=10$ and set $\lambda_b=\lambda_u=10^{-4}\;\mathrm{m}^{-2}$. We compare the RAQR at the uncoupled point $\mathbf{R}=\mathbf{I}$ with conventional arrays using receive correlation coefficients $\varrho=0.5$ and $\varrho=0.7$. The analytical curves follow the Monte Carlo markers closely. The largest absolute difference between analysis and Monte~Carlo, taken over the three plotted curve pairs and the full $\theta$ grid, is $0.0364$, or about $3.64$ percentage points.
	
	As $\varrho$ increases, the conventional receiver loses coverage. The unequal eigenvalues of $\mathbf{R}$ widen the effective channel gain distribution and raise the interference factor $\mu_I$. Both effects reduce the usable array gain. The RAQR does not incur this correlation penalty because its elements are modeled with $\mathbf{R}=\mathbf{I}$. It stays above the $\varrho=0.7$ conventional array over almost all plotted thresholds, and it is also above the $\varrho=0.5$ case except where all curves are already close to one.
	
	Fig.~\ref{fig:fig_sim_e}(b) fixes $\theta=5$~dB and plots the outage probability $1-p_{\mathrm{cov}}(\theta)$ versus $M_r$ on a logarithmic scale. The log scale avoids the ceiling effect that would hide differences once the coverage curves approach one. 
	All outage curves decrease with $M_r$. The uncoupled conventional benchmark gives the lowest outage because it has neither coupling nor Bussgang distortion. The RAQR outage decreases with a slope close to this uncoupled benchmark, while the coupled conventional curves decrease more slowly. For $\varrho=0.7$, the outage remains several times larger than that of the RAQR at the largest plotted $M_r$. Array scaling helps the optimized RAQR, but coupling can reduce the array gain of a conventional receiver as $M_r$ grows.
	
	\begin{figure}[t]
		\centering
		\includegraphics[width=\linewidth]{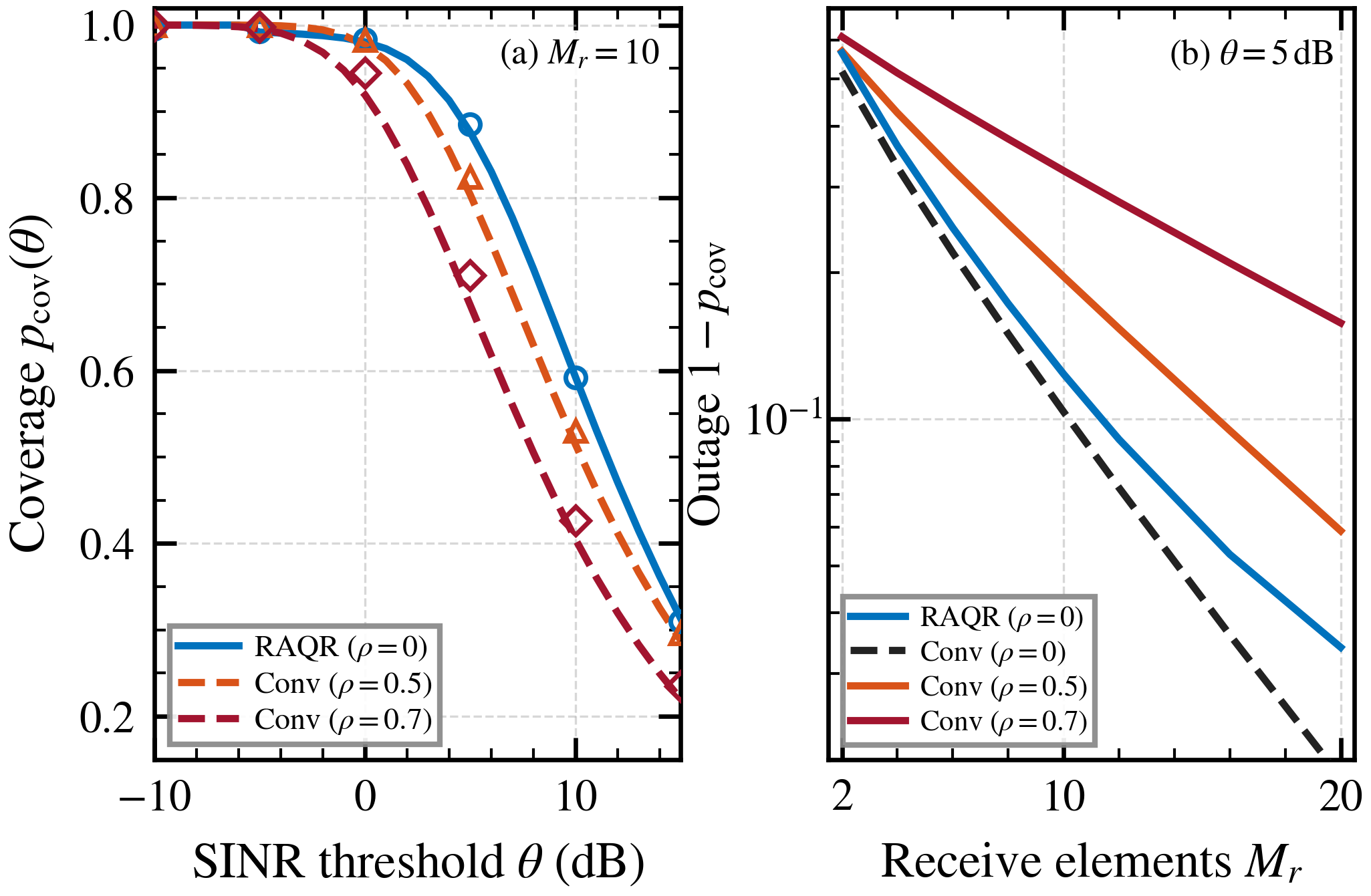}
		\caption{Performance under receive correlation with the remaining parameters taken from Table~\ref{tab:params}. (a)~Coverage probability versus SINR threshold $\theta$ at $\lambda_b=\lambda_u=10^{-4}\;\mathrm{m}^{-2}$ and $M_r=10$. The RAQR is evaluated at $\mathbf{R}=\mathbf{I}$, and the conventional receiver is evaluated with $\varrho \in \{0.5,\,0.7\}$. Lines denote analysis, and markers denote Monte Carlo simulations. (b)~Outage probability $1-p_{\mathrm{cov}}(\theta)$ versus the number of receive elements $M_r$ at $\theta=5$~dB and $\lambda_b=\lambda_u=10^{-4}\;\mathrm{m}^{-2}$. The outage plot in (b) uses analytical curves only.}
		\label{fig:fig_sim_e}
	\end{figure}
	
	\section{Conclusion}\label{sec:conclusion}
	
This paper has studied uplink coverage in cellular networks with RAQR arrays.
By combining the third-order RAQR baseband model with the Bussgang decomposition, the coverage analysis captures how input power changes the receiver's effective noise and distortion.
The simulations show that the RAQR advantage depends on the network regime. In sparse or noise-limited networks, RAQR arrays benefit from their low noise floor and achieve higher coverage than conventional receivers. In denser networks, stronger aggregate input power increases nonlinear distortion and can remove this advantage. Therefore, the operating point should balance small-signal sensitivity with receiver linearity.
A larger array can reduce part of this loss, and the weak coupling of RAQR elements helps retain array gain when conventional antenna arrays suffer from receive correlation.

	\appendices
	
	\section{Frequency Selection in the Third-Order Expansion}\label{app:freq_selection}
	
	This appendix details the spectral analysis of each Taylor order in \eqref{eq:VB_substituted}, justifying the baseband signal obtained in Section~\ref{subsec:baseband_model}. Let $\phi \triangleq \omega_\delta t + \varphi_m$ for brevity.
	
	\emph{Zeroth and first orders.}
	The DC term $a_0$ is removed by AC coupling. The first-order term $a_1 \tilde{\Omega}_m \cos\phi$ lies at $f_\delta$ and passes directly to baseband.
	
	\emph{Second order.}
	Applying $\cos^2\phi = \tfrac{1}{2}(1 + \cos 2\phi)$:
	\begin{equation}\label{eq:second_order}
		\tfrac{1}{2} a_2 \tilde{\Omega}_m^2 \cos^2\phi
		= \tfrac{1}{4} a_2 \tilde{\Omega}_m^2
		+ \tfrac{1}{4} a_2 \tilde{\Omega}_m^2 \cos 2\phi.
	\end{equation}
	Both the DC and $2f_\delta$ components fall outside the baseband filter, so the entire second-order contribution is rejected.
	
	\emph{Third order.}
	Applying $\cos^3\phi = \tfrac{3}{4}\cos\phi + \tfrac{1}{4}\cos 3\phi$:
	\begin{equation}\label{eq:third_order}
		\tfrac{1}{6} a_3 \tilde{\Omega}_m^3 \cos^3\phi
		= \tfrac{1}{8} a_3 \tilde{\Omega}_m^3 \cos\phi
		+ \tfrac{1}{24} a_3 \tilde{\Omega}_m^3 \cos 3\phi.
	\end{equation}
	The $3f_\delta$ harmonic is filtered out, but the fundamental-frequency component survives with an effective coefficient $\tfrac{1}{8} = \tfrac{1}{6} \times \tfrac{3}{4}$.
	
	\emph{Baseband envelope.}
	Collecting the surviving terms at $f_\delta$, the difference-frequency signal is
	\begin{equation}\label{eq:V_fdelta}
		V_{f_\delta}(t)
		= \bigl( a_1 \tilde{\Omega}_m + \tfrac{1}{8} a_3 \tilde{\Omega}_m^3 \bigr) \cos\phi.
	\end{equation}
	After standard in-phase/quadrature (IQ) down-conversion, the complex envelope becomes $v_m = (a_1 \tilde{\Omega}_m + \tfrac{1}{8} a_3 \tilde{\Omega}_m^3) e^{j\varphi_m}$. Substituting $\tilde{\Omega}_m = (\mu_{34}/\hbar)|u_m|$ and using $|u_m|^k e^{j\varphi_m} = |u_m|^{k-1} u_m$ recovers~\eqref{eq:vm_expanded}.

	\section{Proof of Lemma~\ref{lem:input_dist}}\label{app:input_dist_proof}
	
	The $m$-th element input can be written as
	\begin{equation}\label{eq:um_expand}
		u_m = K\!\left(\sqrt{P_0}\,h_{00,m}\,s_0
		+ \sum_{\mathbf{x}_i\in\Phi_u^{(\mathrm{int})}} \sqrt{P_u}\,h_{0i,m}\,s_i \right)\!,
	\end{equation}
	where $h_{00,m}$ and $h_{0i,m}$ denote the $m$-th entries of $\mathbf{h}_{00}$ and $\mathbf{h}_{0i}$, respectively. We verify that $u_m \mid r$ is well approximated as proper circularly symmetric complex Gaussian (CSCG) by checking its first- and second-order moments.
	
	\emph{Zero mean.} Each fading entry $h_{00,m}$, $h_{0i,m}$ is zero-mean CSCG, and the symbols $s_0, s_i$ are independent of the fading with $\mathbb{E}[s_0] = \mathbb{E}[s_i] = 0$. By independence,
	\begin{align}\label{eq:um_zero_mean}
		\mathbb{E}[u_m \mid r]
		&= K\!\bigg(\sqrt{P_0}\,\mathbb{E}[h_{00,m}]\mathbb{E}[s_0] \nonumber\\
		&\quad + \sum_i \sqrt{P_u}\,\mathbb{E}[h_{0i,m}]\mathbb{E}[s_i]\bigg) = 0.
	\end{align}
	
	\emph{Proper (vanishing pseudo-variance).} Recall that a zero-mean complex random variable $u$ is proper if $\mathbb{E}[u^2] = 0$. For the desired term, $h_{00,m}^2 s_0^2$ has $\mathbb{E}[h_{00,m}^2] = 0$ since $h_{00,m}$ is CSCG. The same holds for every interferer. Cross terms between distinct indices vanish by independence. Hence
	\begin{equation}\label{eq:um_proper}
		\mathbb{E}[u_m^2 \mid r] = 0.
	\end{equation}
	
	\emph{Second moment.} The conditional variance is
	\begin{align}\label{eq:var_expand}
		\mathbb{E}[|u_m|^2 \mid r]
		&= K^2\!\left(P_0 C\,r^{-\alpha}
		+ P_u C\,\mathbb{E}\!\bigg[\sum_{\mathbf{x}_i\in\Phi_u^{(\mathrm{int})}}
		\|\mathbf{x}_i\|^{-\alpha} \,\bigg|\, r\bigg]\right)\!.
	\end{align}
	Applying Campbell's theorem to the HPPP $\Phi_u^{(\mathrm{int})}$ with exclusion region $\|\mathbf{x}_i\| \geq r$,
	\begin{equation}\label{eq:campbell}
		\mathbb{E}\!\bigg[\sum_{\mathbf{x}_i\in\Phi_u^{(\mathrm{int})}}
		\|\mathbf{x}_i\|^{-\alpha} \,\bigg|\, r\bigg]
		= 2\pi\lambda_u \int_r^\infty t^{1-\alpha}\,dt
		= \frac{2\pi\lambda_u}{\alpha-2}\,r^{2-\alpha}.
	\end{equation}
	Substitute this result into (65) yields (21).
	
	Since $u_m$ is a sum of independent zero-mean proper complex random variables with finite variance, the central limit theorem for triangular arrays ensures that, as the mean number of dominant interferers grows, $u_m \mid r$ converges in distribution to $\mathcal{CN}(0, \bar{\sigma}_{\mathrm{in}}^2(r))$. The three moment conditions above fix the parameters of this limiting distribution. \hfill$\blacksquare$

	\section{Proof of Lemma~\ref{lem:bussgang}}\label{app:bussgang_proof}
	
	For $u \sim \mathcal{CN}(0,\bar{\sigma}_{\mathrm{in}}^2)$, the complex Gaussian moments are
	\begin{equation}\label{eq:CSCG_moments}
		\mathbb{E}[|u|^2] = \bar{\sigma}_{\mathrm{in}}^2, \quad \mathbb{E}[|u|^4] = 2(\bar{\sigma}_{\mathrm{in}}^2)^2, \quad \mathbb{E}[|u|^6] = 6(\bar{\sigma}_{\mathrm{in}}^2)^3.
	\end{equation}
	
	\emph{Bussgang gain.} The Bussgang decomposition seeks the scalar $\kappa$ that minimizes $\mathbb{E}[|g(u) - \kappa u|^2]$. Setting $\partial/\partial\kappa^* = 0$ yields $\kappa = \mathbb{E}[g(u) u^*]/\mathbb{E}[|u|^2]$. Substituting $g(u) = c_1 u + c_3|u|^2 u$,
	\begin{equation}\label{eq:kappa_derive}
		\begin{aligned}
			\kappa &= \frac{\mathbb{E}[(c_1 u + c_3|u|^2 u)u^*]}{\bar{\sigma}_{\mathrm{in}}^2} \\
			&= \frac{c_1\bar{\sigma}_{\mathrm{in}}^2 + 2c_3(\bar{\sigma}_{\mathrm{in}}^2)^2}{\bar{\sigma}_{\mathrm{in}}^2} \\
			&= c_1 + 2c_3\bar{\sigma}_{\mathrm{in}}^2.
		\end{aligned}
	\end{equation}
	
	\emph{Residual orthogonality.} Let $d \triangleq g(u) - \kappa u = c_3(|u|^2 - 2\bar{\sigma}_{\mathrm{in}}^2)u$. By construction of $\kappa$,
	\begin{equation}\label{eq:orthogonality}
		\mathbb{E}[d\,u^*] = \mathbb{E}[g(u)u^*] - \kappa\,\mathbb{E}[|u|^2] = \kappa\bar{\sigma}_{\mathrm{in}}^2 - \kappa\bar{\sigma}_{\mathrm{in}}^2 = 0,
	\end{equation}
	so $d$ is uncorrelated with $u$. This justifies treating $\mathbf{d}$ as an independent additive term in \eqref{eq:v_linear} for the subsequent second-order interference analysis.
	
	\emph{Distortion variance.}
	\begin{align}
		\sigma_d^2 &= \mathbb{E}[|d|^2] = |c_3|^2\,\mathbb{E}\!\left[\left||u|^2 - 2\bar{\sigma}_{\mathrm{in}}^2\right|^2 |u|^2\right] \nonumber\\
		&= |c_3|^2\!\left(\mathbb{E}[|u|^6] - 4\bar{\sigma}_{\mathrm{in}}^2\,\mathbb{E}[|u|^4] + 4(\bar{\sigma}_{\mathrm{in}}^2)^2\,\mathbb{E}[|u|^2]\right) \nonumber\\
		&= |c_3|^2\!\left(6(\bar{\sigma}_{\mathrm{in}}^2)^3 - 8(\bar{\sigma}_{\mathrm{in}}^2)^3 + 4(\bar{\sigma}_{\mathrm{in}}^2)^3\right) \\
		&= 2|c_3|^2(\bar{\sigma}_{\mathrm{in}}^2)^3,
	\end{align}
	which completes the proof. \hfill$\blacksquare$

	\section{Proofs of Lemma~\ref{lem:laplace} and Theorem~\ref{thm:cond_coverage}}\label{app:coverage_proofs}
	
	\emph{Proof of Lemma~\ref{lem:laplace}.}
	Starting from the definition \eqref{eq:I_def},
	\begin{align}
		\mathcal{L}_{I|R_0}(s|r)
		&= \mathbb{E}\!\left[\exp\!\left(-s \sum_{\mathbf{x}_i \in \Phi_u^{(\mathrm{int})}} P_u C\,\|\mathbf{x}_i\|^{-\alpha}\,G_i\right)\right] \nonumber\\
		&= \mathbb{E}_{\Phi_u}\!\left[\prod_{\mathbf{x}_i \in \Phi_u^{(\mathrm{int})}} \mathbb{E}_{G_i}\!\left[e^{-s P_u C\,\|\mathbf{x}_i\|^{-\alpha}\,G_i}\right]\right]. \label{eq:laplace_step1}
	\end{align}
	Since $G_i \sim \exp(1)$, its moment generating function gives
	\begin{equation}\label{eq:mgf_exp}
		\mathbb{E}_{G_i}\!\left[e^{-s P_u C\,\|\mathbf{x}_i\|^{-\alpha}\,G_i}\right] = \frac{1}{1 + s P_u C\,\|\mathbf{x}_i\|^{-\alpha}}.
	\end{equation}
	Applying the PGFL of the HPPP $\Phi_u^{(\mathrm{int})}$ with density $\lambda_u$ on $\{\mathbf{x} \in \mathbb{R}^2 : \|\mathbf{x}\| \geq r\}$,
	\begin{align}
		\mathcal{L}_{I|R_0}(s|r)
		&= \exp\!\left(-\lambda_u \int_{\|\mathbf{x}\|\geq r}\!\!\left(1 - \frac{1}{1 + s P_u C\,\|\mathbf{x}\|^{-\alpha}}\right)\!d\mathbf{x}\right) \nonumber\\
		&= \exp\!\left(-2\pi\lambda_u \int_r^\infty \frac{s P_u C\,t^{-\alpha}}{1 + s P_u C\,t^{-\alpha}}\,t\,dt\right), \label{eq:laplace_general_proof}
	\end{align}
	which matches \eqref{eq:laplace_general}. For $\alpha = 4$, substitute $\tau = t^2$ ($d\tau = 2t\,dt$):
	\begin{align}
		2\pi\lambda_u \int_r^\infty \frac{\zeta\,t^{-4}}{1 + \zeta\,t^{-4}}\,t\,dt
		&= \pi\lambda_u \int_{r^2}^\infty \frac{\zeta}{\tau^2 + \zeta}\,d\tau \nonumber\\
		&= \pi\lambda_u \sqrt{\zeta}\!\left(\frac{\pi}{2} - \arctan\!\left(\frac{r^2}{\sqrt{\zeta}}\right)\right),
	\end{align}
	where $\zeta \triangleq sP_uC$ and the last step uses $\int \frac{d\tau}{\tau^2 + a^2} = \frac{1}{a}\arctan(\tau/a)$. \hfill$\blacksquare$

	\emph{Proof of Theorem~\ref{thm:cond_coverage}.}
	From \eqref{eq:coverage_event} and the approximation \eqref{eq:Gamma_approx}:
	\begin{align}
		p_{\mathrm{cov}}(\theta|r)
		&\approx \mathbb{E}_{I}\!\left[1 - \left(1 - e^{-\nu\,s(r,\theta)(I+\sigma_{\mathrm{eff}}^2)}\right)^{M_r}\right].
	\end{align}
	Expanding via the binomial theorem $1 - (1-e^{-y})^{M_r} = \sum_{k=1}^{M_r}\binom{M_r}{k}(-1)^{k+1}e^{-ky}$ and substituting $y = \nu\,s(r,\theta)(I + \sigma_{\mathrm{eff}}^2)$:
	\begin{align}
		p_{\mathrm{cov}}(\theta|r) &\approx \sum_{k=1}^{M_r}\binom{M_r}{k}(-1)^{k+1} e^{-k\nu\,s\,\sigma_{\mathrm{eff}}^2}\,\mathbb{E}\!\left[e^{-k\nu\,s\,I}\right] \nonumber\\
		&= \sum_{k=1}^{M_r}\binom{M_r}{k}(-1)^{k+1} e^{-k\nu\,s\,\sigma_{\mathrm{eff}}^2}\,\mathcal{L}_{I|R_0}(k\nu s\,|\,r),
	\end{align}
	where the factorization uses the fact that $\sigma_{\mathrm{eff}}^2(r)$ is deterministic conditioned on $r$. \hfill$\blacksquare$

	\section{Proof of Proposition~\ref{prop:pcov_coup}}\label{app:coupling_proof}
	
	The coverage event is $\{\tilde{H}_0 > s(r,\theta)(I + \sigma_n^2)\}$. Substituting the partial-fraction CCDF \eqref{eq:CCDF_Htilde}:
	\begin{align}
		p_{\mathrm{cov}}^{(\mathrm{coup})}(\theta|r) &= \mathbb{E}_I\!\left[\sum_{k=1}^{M_r}\omega_k\,e^{-s(I+\sigma_n^2)/\lambda_k}\right] \nonumber\\
		&= \sum_{k=1}^{M_r} \omega_k\,e^{-s\sigma_n^2/\lambda_k}\,\mathbb{E}_I\!\left[e^{-sI/\lambda_k}\right].
	\end{align}
	Identifying $\mathbb{E}[e^{-sI/\lambda_k}] = \mathcal{L}_{I|R_0}^{(\mathrm{coup})}(s/\lambda_k \mid r)$ yields \eqref{eq:pcov_coup}. \hfill$\blacksquare$

	\bibliographystyle{IEEEtran}
	\bibliography{IEEEabrv,reference}

\end{document}